\newcommand{\D}{\mathcal{D}} 
\newcommand{\ord}{$\rm{OrderMod}$}
\title{Linear block and convolutional MDS 
  codes to required rate, distance and type
  %% dual-containing, quantum and LCD codes to requirements and infinite
%%   good series of such  
%%   codes%% , series of dual-containing and
%% %
  % quantum MDS codes,
%% series of MDS LDC codes, all  of linear and convolutional types.
% from unit schemes
%with good free distances} 
 }% attainable}
\author{Ted
 Hurley\footnote{National Universiy of Ireland Galway, email:
 Ted.Hurley@NuiGalway.ie}}
\date{} 
\begin{document}
\maketitle

\begin{abstract}\let\thefootnote\relax\footnote{\noindent Keywords:
    code, MDS, dual-containing, QECC, LCD, convolutional

MSC Classification: 94B05, 11T71, 16S99}
 Algebraic methods for the 
 design  of series of maximum distance separable (MDS) linear block
 and convolutional codes to required specifications and types are
 presented. Algorithms are given to design codes to required rate and
 required 
 error-correcting capability and required types.  Infinite
 series of block  codes 
 with rate approaching a given rational $R$ with $0<R<1$ and relative
 distance over length approaching $(1-R)$ are designed. These can be
 designed over fields of given characteristic $p$ or over fields of
 prime order and can be specified to be of a particular type  
 such as (i) dual-containing under Euclidean inner product, (ii)
 dual-containing under Hermitian inner product, (iii)
 quantum error-correcting, (iv) linear complementary dual (LCD).
 %% Methods for designing both linear block  and convolutional such codes are
 %% devised. The codes may be required to be over fields of
 %% characteristic a given prime or over prime fields.
 Convolutional 
 codes to required rate and distance and
 infinite series of convolutional codes
 with rate approaching a
 given rational $R$ and distance over length approaching $2(1-R)$ are designed. 
%% The general methods lead to the construction  and
%% analysis of series and infinite series of such codes dual containing
%% convolutional  codes leading to quantum convolutional codes.
The designs are algebraic and properties, including distances, are shown  
 algebraically.  Algebraic explicit  efficient decoding methods are referenced. % for some of the
 %constructions.  
 %% Convolutional codes at or near the maximum free distances attainable
%% are  constructible. %  be of a special required  type.  %% convolutional codes. 

%% The methods enable the construction, design and analysis of  series of
%% LDPC (low
%% density parity check) convolutional codes including decoding schemes.  
%% Self-dual and dual-containing convolutional codes may also be designed by the methods; dual-containing codes enables the construction of quantum codes.  
\end{abstract}
\section{Introduction}
\subsection{Motivation, summary}
Linear block and convolutional codes are error-correcting codes which 
are used extensively in many  applications 
including digital video,
radio, mobile communication, and satellite/space
communications. {\em  Maximum distance separable} (MDS) codes are of
particular interest and  {\em dual-containing codes}, which lead
to the design of {\em quantum error-correcting codes}, QECCs, and {\em linear complementary
dual}, LCD,  codes are also of great interest with many  application. %% Of
%% particular interest are low density parity check (LDPC)
%% convolutional
%% codes and dual containing convolutional codes which lead to the design
%% of quantum convolutional codes.
Codes for which there exist efficient decoding methods are required
and necessary for applications.

This paper gives  design methods for  both linear block codes and
convolutional codes of the highest distance possible for a particular
length and rate. The design methods are then extended to particular {\em types} of codes. {\em Types} here include DC (dual-containing), 
  QECC (quantum error-correcting codes), LCD (linear complementary dual). {\em MDS convolutional codes} are designed where MDS here means the codes attain the GSB (generalised Singleton bound, see section \ref{convo} below for definition) for convolutional codes.
The methods allow the design of codes to given 
specifications and to design infinite series of such codes. The block
linear MDS DC codes are  designed under both 
 Euclidean and  Hermitian inner products and lead to
MDS %% \footnote{MDS may have 
  %% different parameters according to the types of code under consideration}
QECCs under Euclidean or Hermitian  inner products. %%  LCD codes
%% and then stretching these to convolutional MDS, MDS dual-containing,
%% MDS quantum and MDS LCD codes.

%% After some basic constructions, a number of algorithms are
%% given. These are designed to construct codes and MDS codes to given
%% rate and given error-correcting capability and then to specify these
%% for given types of codes. 
For given (allowable) rate $R$ and given distance $(2t+1)$ (that is, for
specified error-correcting capability), 
design methods for MDS block linear codes  with efficient
decoding algorithms are given. %% to
%% specific rate $R$ and to specific error-correcting capability with specified
%% distance $2t+1$ are given together with efficient decoding algorithms.
Design methods for {\em types} of such are then derived, where
`types' can be DC  
or LCD; QECCs are obtainable from DC codes. These  are 
further specified to be over fields of
characteristic a fixed prime $p$ or over  a field of prime
order. In fields of prime order the arithmetic is modular arithmetic which is particularly nice and efficient.
Infinite series of codes, which can be required to be DC or LCD, in
which the rate approaches a given $R$ and the relative
distance (ratio of distance over length) approaches $(1-R)$ are
designed. These infinite series can  be  specified  to be
codes with characteristic a given prime $p$ or to be  
codes over a field of prime order. Again note that QECCs are
obtainable from DC codes, and if the DC code is MDS then the QECC 
obtained is maximum distance attainable for a QECC with those parameters. 

In general the convolutional codes designed offer 
better distances than the equivalent block linear code of the same
length and rate. Note that the convolutional codes   are {\em designed
  algebraically}.\footnote{There exist very few algebraic
  constructions for designing convolutional 
codes and  search methods limit their size and
availability, see McEliece \cite{mac} for discussion and
also \cite{alm1,alm2,guar,mun}.}

%% These codes
%% are often implemented in concatenation with a hard-decision code,
%% particularly Reed Solomon. Prior to turbo codes, such constructions
%% were the most efficient, coming closest to the Shannon limit.
%The literature on convolutional codes is extensive and expanding. 
  \subsection{Background, notation} Background on  coding theory  
  may be found in \cite{blahut}, 
  \cite{joh}, \cite{mceliece},  \cite{mac},  \cite{macslo} 
  and many others.

 %% separabl (MDS) code is one where the Singleton bound is attained,
 %% that is, one of the form $[n,r,n-r+1]$. 
%Most of the codes available in the literature are constructed
%case by case and by computer search 
%% \cite{mac} notes that 
%% many of the existing convolutional
%% codes
%% had been found by  pointing out  % for  information on and analysis of 
% convolutional codes 
%% There exists very few algebraic constructions for designing convolutional
%% codes and  search methods limit their size and
%% availability, see McEliece, \cite{mac} for discussion and
%% \cite{alm1,alm2,guar,mun} for some algebraic constructions. %the lack of algebraic methods for their  construction.    
%% Here algebraic formulations using unit schemes are developed 
%% for designing and analysing convolutional codes. % There is no limit on
% the length.  

  The notation for linear block codes is fairly standard.   Here
$[n,r,d]$ denotes a linear block  code of 
  length $n$, dimension $r$ and distance $d$. %% A maximum distance
The maximum distance attainable by an $[n,r]$
linear block code is $(n-r+1)$ and this is known as the {\em Singleton bound}, 
 see \cite{blahut} or \cite{mac}. A linear code
$[n,r]$ attaining the maximum distance possible  
is called an {\em MDS} (maximum
distance separable) code. The ratio of
distance over length features here and we refer to this as the
{\em relative distance}, rdist for short, of the code.  
The MDS linear block codes are those with
maximum error correcting capability for a given length and dimension. MacWilliams and Sloane refer to
MDS codes in their book \cite{macslo} as ``one of the most fascinating
chapters in all of coding theory''; MDS codes are equivalent to geometric objects called n-arcs and combinatorial objects called orthogonal arrays, \cite{macslo},
and are, quote, ``at the heart of combinatorics and finite
geometries''.

%  \paragraph{Why MDS?} 
 A dual-containing, DC,  code $\C$ is a code which contains its dual
 $\C^\perp$; thus a DC code is a code $\C$ such that $\C\cap \C^\perp =
 \C^\perp$. % where $\C^\perp$ denotes the dual code of $\C$.

 A  linear complementary dual, LCD,  code is one such that its
intersection with its dual is zero, that is, it's a code $\C$ such that
$\C\cap \C^\perp = 0$ 

LCD codes and DC codes are `supplemental' to
one another in the sense that $\C$ is DC if
$\C\cap\C^\perp = \C^\perp$ and $\C$ is LCD if $\C \cap \C^\perp = 0$.
We shall see this
further in action when MDS DC block linear codes are extended to MDS LCD
convolutional codes and LCD MDS codes are extended to MDS DC 
convolutional codes. %% where the distance of the convolutional codes are
%% of the order of twice the distance of the
%% corresponding MDS linear code with the same length and rate.

{\bf Why DC?} DC codes have been studied extensively in particular
 since they lead by the CSS construction to the design of quantum
 error-correcting codes, QECCs, 
%% QECCs are designed  from DC codes by CSS method,
see \cite{calderbank} and also \cite{quantum}.  
%% DC MDS codes  lead to the design of MDS QECCs. DC MDS
%% convolutional codes 
%% lead to  the design of quantum MDS convolutional codes.
The CSS constructions are specified  as follows: 
\begin{itemize} 
\item Let $\mathcal{C}$ be a linear block code $[n,k,d]$ over $GF(q) $ containing its dual $\mathcal{C}^\perp$. The CSS construction derives a quantum (stabilizer)  $[[n,2k-n,\geq d]]$ code over $GF(q)$. 
\item Let $\mathcal{D}$ be a linear block code over $GF(q^2)$ containing its Hermitian dual  $\mathcal{D}^{\perp_H}$. The CSS construction derives a quantum (stabilizer) code $[[n,2k-n, \geq d]]$ code over $GF(q^2)$.
\end{itemize}
For more details on CSS constructions of QECCs 
see \cite{ash,ketner}; proofs of the above  may also be found therein. The work of \cite{ash} follows from Rains' work on non-binary codes \cite{rains}.  
%% in \cite{ash} and \cite{ketner}.  
As noted in for example  \cite{ash} if the DC code used for the CSS construction is an MDS linear code then the quantum code obtained is a {\em quantum MDS code} which means it has the best possible distance attainable for such a quantum code.

{\bf Why LCD?} LCD codes have been studied extensively in the literature. For
background, history and general theory %and more constructions %of
on LCD codes,
consult the nice articles \cite{sihem3,sihem2,sihem4, sihem} by
Carlet, Mesnager, Tang, Qi and Pelikaan. LCD codes were originally
introduced by Massey in \cite{massey,massey2}.  These codes have been
studied amongst other things for improving the security of information on sensitive devices
against {\em side-channel attacks} (SCA) and {\em fault non-invasive
  attacks}, see \cite{carlet}, and have found use in {\em data
  storage} and {\em communications' systems}.  %, and consumer

\subsubsection{Notation for convolutional codes}\label{convo} Notation(s) for convolutional codes can be confusing. 
Different equivalent  
 definitions are given in the literature and these are analysed nicely in %in the literature  
\cite{rosenthal1}. %% Interests here  tend towards algebraic definitions
 %% and formulations. 
The following definition is followed here.  A rate $\frac{k}{n}$ convolutional code with parameters $(n,k,\de)$ over a field $\F$ is 
 a submodule of $\F[z]^n$ generated by a reduced
basic matrix $G[z] =(g_{ij}) \in \F[z]^{r\ti n}$ of rank $r$ where $n$ is
the length,  $\de = \sum_{i=1}^k \de_i$ is the {\em degree}
with  $\de_i= \max_{1\leq j\leq k}{\deg g_{ij}}$. Also 
$\mu=\max_{1\leq i\leq r}{\de_i}$ is known as the {\em memory} of the
code and then the code may be given with parameters $(n,k,\de;\mu)$. 
% The notation $(n,r,\de)$ is used for a rate $r/n$ convolutional
% code with  degree $\de$ with the memory not indicated.  
The parameters $(n, r,\delta;\mu, d_f)$ are  used for such a code 
with free (minimum) distance $d_f$. %%  the notation 
%%  $(n,r,\delta;\mu, \geq d_f)$
%% is used to denote such a code with free distance $\geq d_f$.
%% A convolutional code may equivalently be described as
%% follows. A 
%% convolutional code $\C$ 
%% of length $n$ and dimension $k$ is a direct
%% summand of $\F[z]^n$ of rank $k$. See for example \cite{heine} and
%% \cite{rosenthal1}. Here $\F[z]$ is the polynomial
%% ring over $\F$ and  $\F[z]^n = \{(v_1, v_2, \ldots,
%% v_n) : v_i \in \F[z]\}$. 
% Suppose $V$ is a submodule of $\F[z]^n$ and
% that $\{v_1, \ldots, v_r\} \subset \F[z]^n$ forms a generating set for
% $V$. Then  %$V =
% %\text{Im} \, M = 
% %\{uG[z] : u \in \F[z]^r\}$ where 
% $G[z]= \begin{pmatrix} v_1 \\ \vdots \\ v_r \end{pmatrix} \in
% \F[z]_{r\times n}$ is called a {\em generating matrix} of
% $V$.
 Suppose $\C$ is a convolutional code in $\F[z]^n$ of rank $k$. A generating matrix $G[z] \in
\F[z]_{k\times n}$ of $\C$ having rank $k$ is called a
{\em generator} or {\em encoder matrix}  of $\C$. 
A matrix $H \in \F[z]_{n\times(n-k)}$ satisfying $\C = \ker H =
\{v \in \F[z]^n : vH = 0 \}$ is said to be a {\em control matrix} or
    {\em check matrix} of the code $\C$. 

%which may be a known or an unknown quantity. 
%%   and $d_f$ is the
%% free distance of the code. We may also use $(n,r,\delta;\mu)$ to
%% specify the type without stating the free distance. % which may be unknown. 

 %   \quad
    
%% The constructions here use rows or blocks of 
%% unit  schemes.  Series of non-equivalent
%% convolutional codes may be obtained from the same unit system. 

%% Varying the type or model of the unit scheme enables convolutional codes of
%% different types or models to be constructed with specific properties. % are  obtained and analysed  
%% % by utilising special types of  invertible schemes.
%% Using schemes where one of the units has low density allows  the
%% construction of LDPC (low density parity check) convolutional codes,
%% Section \ref{ldpc}. %or other properties. 

%% Specialising to %unitary or
%% orthogonal schemes leads to the construction of series of self-dual and
%% dual-containing convolutional codes, Section \ref{selfdual}. 
%% %% \ref{dualcontain}
%%  Dual-containing  (which includes self-dual) 
%% codes lead to  the construction of quantum codes,
%% \cite{calderbank}; see also \cite{quantum}. 

%% Section \ref{attrib} contains  a  summary with some conclusions which could
%% be read at this stage. % of our methods. 
 Convolutional codes can be {\em
   catastrophic or non-catastrophic}; see
 for example \cite{mceliece} for the basic definitions. A
catastrophic convolutional code is prone to catastrophic error
propagation and is not much use. A
convolutional code described by a generator matrix  with {\em right
polynomial inverse} is a non-catastrophic code; this is sufficient for
our purposes. The designs given here for  the generator matrices allow for
specifying directly the control matrices  and right polynomial
inverses of the generator matrices. % may be constructed directly. 

By Rosenthal and Smarandache, \cite{ros},
 the maximum free distance attainable by an
$(n,r,\delta)$ convolutional  code is  $(n-r)(\floor{\frac{\de}{r}}+1)+
\de +1$. The case  $\delta
=0$, which is the case of zero memory, corresponds to the linear 
Singleton bound $(n-r+1)$. 
The bound $(n-r)(\floor{\frac{\de}{r}}+1)+
\de +1$  
is then called  the {\em generalised Singleton  bound}, \cite{ros}, GSB,
and a convolutional code attaining this bound is known as  an {\em MDS
convolutional code}. The papers \cite{ros} and \cite{smar} 
are major contributions to the area of convolutional codes.   

In convolutional coding theory, the idea of {\em dual code} has 
two meanings. The one used here is what is referred to as the {\em
  convolutional dual code}, see \cite{dualconv} and \cite{dual2}, and
is known also as {\em the module-theoretic dual code}. The other dual
  is called {\em the sequence space dual} code. The two generator
matrices for these `duals' are related by a specific formula.  If
$G[z]H\T[z] = 0$ for a generator matrix $G[x]$, then $H[z^{-1}]z^m$ for memory $m$ generates  the convolutional/(module theoretic) dual code. The code is then dual-containing provided the code generated by $H[z^{-1}]z^m$ is contained in the code generated by $G[z]$.

%is  denoted here by GSB.   

%% In \cite{ros}, Rosenthal and Smarandache give a non-constructive
%% proof of the existence of
%% $(n,r,\de)$ MDS codes %over some finite field $\F$ 
%% and then in
%% \cite{smar} explicit examples over suitably large fields are
%% given. 

%%  The paper \cite{napp} of Napp and Smarandache discusses MDP 
%% (maximum distance profile) convolutional codes and constructs such
%% codes and  \cite{chan} discusses  methods for constructing unit
%% memory MDS convolutional codes. 
 
%% Many of the constructions in the literature are special cases of the general construction here; for example those in \cite{blahut} and \cite{mceliece,mac} are easily obtained by small cases of the general constructione here. 
%% See also \cite{napp} for %some 
%% further developments  on MDS and what are called strongly-MDS  
%% convolutional codes for unit memory codes.

The papers  \cite{heine1},
\cite{rosenthal} introduce  
certain algebraic decoding techniques  for  convolutional
codes. Vetterbi or sequential decoding are available for 
convolutional codes, see \cite{blahut}, \cite{joh} or \cite{mceliece}
and references therein. The form of the control matrix derived here leads to algebraic implementable error-correcting algorithms.

%% In \cite{LCDOrth}, LCD codes are constructed using orthogonal sets of vectors. matrices.  

% methods. %%  these
 
%but these aspects are not dealt with here.   

%% For a given generator matrix 
%% $G[z] \in \F[z]_{r\ti n}$ of a convolutional code,  a codeword is of
%% the form  $u(z)G[z]$ with  $u(z) \in \F[z]^r$. 
%% Here   $u(z)$ is  {\em the information vector } of  the
%% codeword $u[z]G[z]$. 
%% Now  $u(z) = \sum_i\al_iz^i$ for vectors $\al_i\in \F^r$; the $\al_i$ are 
%%   the {\em components} of
%% $u(z)$. The {\em support} of $u(z)$ is the
%% number of non-zero (vector) components of $u(z)$.  

%% Some of the the codes constructed have the property that
%% the maximum bound  is attained by the codeword.
%% for small support of the  information
%% vector  and as the
%% support of the information vector increases the (free) distance of the
%% codeword is increased. %The fields can be relatively small. 
%% (Algebraic decoding techniques 
%% for some convolutional codes will be amplified in \cite{hurley22}.)

 %% The emphasis is on the design and analysis of MDS,
%%  DC MDS codes, MDS QECCs, LCD MDS codes
%%  in both block linear and convolutional form;  these are designed
%%  relative to  Euclidean or Hermitian inner products as required.   
%% %
 % convolutional codes and of MDS  or near
%% MDS dual-containing convolutional codes.

%% design convolutional codes to specific
%% requirements. 
The MDS block linear codes to requirements rate $R$ are
extended to MDS convolutional codes with rate $R$ and with the
order of twice the distance of the linear block MDS codes of the same length. These may
also be specified to be (i) of characteristic a fixed prime $p$, (ii) of
prime order, (iii) DC, or (iv) LCD. Noteworthy here is how
MDS DC block linear  lead to convolutional MDS with memory $1$ LCD codes,  and in
characteristic $2$, LCD MDS block linear codes lead to
the design of DC memory $1$ convolutional MDS codes.  The DC codes 
may be designed with Euclidean inner product or with Hermitian inner product.

\subsubsection{Previously} In  \cite{unitderived} a general method for  deriving MDS
codes to specified rate and specified error-correcting capability is established; 
\cite{hurleyquantum} gives a general method for designing
DC MDS codes of arbitrary rate and error-correcting
capability, from which MDS QECCs can be specified,
\cite{hurleylcd} specifies a general method for designing LCD  MDS codes to arbitrary requirements  
 and \cite{hurleyconv} gives general methods for designing
convolutional codes. The unit-derived
methods devised in \cite{hurleyorder,hurley,hur1,hur2,hur11} are often 
in the background.

\subsection{Abbreviations}
DC: dual-containing\\
LCD: linear complementary dual\\
QECC: quantum error-correcting code\\
MDS: maximum distance separable \footnote{This has different parameter
requirements for
linear block codes, convolutional codes and QECCs.}\\
GSB: Generalised Singleton bound\\
rdist: relative distance, which is the ratio of distance over length.

\section{Summary of design methods}

 Section \ref{thms} describes explicitly the  design algorithms in detail. 
Here's a summary of the main design methods. % acquired.
 %First designs:
%% \begin{enumerate}\label{design} \item  Design MDS  block linear codes
%%   to rate $\geq R$ and distance $\geq (2t+1)$ with efficient decoding
%%   algorithms.  \item  Design MDS  block linear codes
%%   to rate $\geq R$ and distance $\geq (2t+1)$ with efficient decoding
%%   algorithms in fields of (fixed) characteristic $p$. 
%% \item\label{hy}  Design MDS  block linear codes
%%   to rate $\geq R$ and distance $\geq (2t+1)$ with efficient decoding
%%   algorithms in prime order fields. % Algorithm  \ref{alg3}.
%% \end{enumerate}
\subsection{Linear block MDS}Design methods are given initially for:
\\ (i) MDS  linear block codes
  to rate $\geq R$ and distance $\geq (2t+1)$ with efficient decoding
  algorithms. \\ (ii) MDS  linear block codes
  to rate $\geq R$ and distance $\geq (2t+1)$ with efficient decoding
  algorithms over fields of (fixed) characteristic $p$. \\ 
(iii) MDS  linear block codes
  to rate $\geq R$ and distance $\geq (2t+1)$ with efficient decoding
  algorithms over prime order fields. % Algorithm  \ref{alg3}.
%\end{enumerate}

Then  {\em types} of codes are required. Thus
design methods  are obtained, as  in the above (i)-(iii), where   
``MDS linear block'' is replaced by ``MDS linear block {\em of type X}'' where
{\em of type X}  is (a) DC or (b) LCD. These may
be designed with Hermitian inner product when working over fields of type
$GF(q^2)$. For Hermitian inner products, in  item (iii), the `prime order
fields' needs to be replaced by `fields of order $GF(p^2)$, where $p$ is prime'.
The DC codes designed are then used to design MDS QECCs. 

Then   infinite series of such  block block codes
are designed so that the rate approaches 
$R$ and rdist approaches $(1-R)$ for given
  $R$, $0<R<1$;  for DC codes it is required  $\frac{1}{2} < R < 1$. The
infinite series of MDS QECCs
designed from the DC codes have rdist approaching $(2R-1)$ and rate
approaching $R$ for given  $R$, $1>R>\frac{1}{2}$. 
  
  Specifically:
  \begin{itemize}\label{design1} \item  Design of infinite series of
    linear block codes $[n_i,r_i,d_i]$,  such that $\lim_{i\rightarrow
     \infty}\frac{r_i}{n_i}=R$, and  
   $\lim_{i\rightarrow \infty}\frac{d_i}{n_i}=1-R$
   \item Design of infinite series of
    MDS  block linear codes $[n_i,r_i,d_i]$,  such that
   $\lim_{i\rightarrow \infty}\frac{r_i}{n_i}=R$ and
   $\lim_{i\rightarrow
    \infty}\frac{d_i}{n_i}=1-R$ in fields of characteristic $p$. 
  \item Design of infinite series of
    MDS  block linear codes $[n_i,r_i,d_i]$,  such that
    $\lim_{i\rightarrow \infty}\frac{r_i}{n_i}=R$ and  $\lim_{i\rightarrow
    \infty}\frac{d_i}{n_i}=1-R$ in prime order fields or in fields
  $GF(p^2)$ for Hermitian inner product.  
\end{itemize}

Further in the above  a `type'  may be included in the infinite series designed
where   `type' could be  `DC' or `LDC'. For DC, it is necessary that
$R>\frac{1}{2}$, and then infinite series of QECCs
$[[n_i,2r_i-n_i,d_i]]$ are designed where  $\lim_{i\rightarrow
  \infty}\frac{2r_i-n_i}{n_i}=2R-1$ (limit of rates) but still
$\lim_{i\rightarrow 
  \infty}\frac{d_i}{n_i}=1-R$, for given $R$, $R>\frac{1}{2}$

\subsection{Convolutional MDS}
Convolutional MDS codes and infinite series of convolutional MDS codes
are designed. These are designed to specified rate and distance and in
general have  better relative distances. In memory $1$ the distances
obtained are of the order of twice that of the corresponding MDS
linear block 
codes with the same length and rate. Higher memory convolutional codes
are briefly discussed. 
%and will be expanded on later.
%% The methods of \cite{hurleyconv} %% uses
%% %% rows of an invertible 
%% %% matrix to establish a method for designing convolutional code and
%% %% convolutional codes to specific requirements.
%% are relevant  for the convolutional case designs here but the designs here
%% are independent of this. %here.

Thus memory $1$ MDS convolutional
codes are designed as follows.
\begin{itemize} \item Design of MDS  convolutional
  codes of memory $1$ with the same length and rate as the
  corresponding MDS linear block code but with twice
  the distance less $1$. \item Design of MDS convolutional in characteristic $p$ of the same length and
  rate as the corresponding MDS block linear codes but twice the
  distance less $1$. 
  \item  Design of MDS convolutional over a prime field of the same length and
  rate as the corresponding MDS block linear codes but with twice the
  distance, less $1$. 
\end{itemize}
These may be extended to higher memory  MDS
convolutional codes. %and are described later. 

These are then specified for particular types such as DC or LCD.
The LCD linear block codes when
`extended' to convolutional codes give rise to  DC codes, and the DC
linear block codes when
`extended' to convolutional codes give rise to LCD codes in characteristic $2$. 

Infinite series of convolutional codes are designed as follows. 
\begin{itemize} \item Design of infinite series of MDS  convolutional
  codes $(n_i,r_i,n_i-r_i;1,2(n_i-r_i))$  such that 
  $\lim_{i\rightarrow \infty}\frac{r_i}{n_i}=R$, $(2R-1)$,  $\lim_{i\rightarrow
    \infty}\frac{d_i}{n_i}=(2R-1)$.
\item Design of such infinite series over fields of (fixed) characteristic $p$.
\item Design of such infinite series over fields of prime order.  
\end{itemize}

%% These are  MDS memory $1$ convolutional codes.
Then such  infinite series designs of convolutional codes are
described  for `types' of codes such as DC or LCD or  %%  with appropriate
%% modifications as necessary.
QECCs, which may be  designed from DC codes.
 \subsection{Characteristic $2$ and prime fields}

The designs over the fields $GF(2^i)$ and over prime fields $GF(p)=\Z_p$  have a
particular interest and have nice properties. The 
designs to specific requirements and specific type both linear block
and convolutional can be constructed over such fields. 
%%  $GF(2^i)$ has an element of order
%% $2^i-1$ whose powers describe all its non-zero elements.
$GF(p)$, $p$ a prime, has an element of order
$(p-1)$ which is easily found and arithmetic within $GF(p)$ is {\em modular
arithmetic}. %% For Hermitian inner products, it is necessary to work over
%% the fields  $GF(2^{2s})$ and over $GF(p^2)$.

\subsection{Examples} 
Examples are given throughout. Although there is no restriction on
length in general, examples explicitly written out  here are
limited in their size. 
%% is some emphasis in going from designing block linear MDS, MDS
%% dual-containing, 
Example \ref{pro7} below is a prototype example of small order which
has some hallmarks of the general designs; it is instructional  and may be read now with
little preparation. Example \ref{large} is an instructional example on the
design methods for  linear block (MDS)  codes to meet specific rates and distances. The
general designs are more powerful and include designs for convolutional
codes. %%  there are
%% others indeed. 
%% to see how the designs are instigated;  this is  a low length case for
%% demonstation purposes  although in general there is no restriction on
%% length and rate in the general setup. %%  this may be 
%% considered a prototype example,  

          %% = \begin{ssmatrix}e_2\\e_5\\e_3\\e_4 \end{ssmatrix}
          %% + \begin{ssmatrix} 0 \\ e_0 \\ e_1 \\ e_2\end{ssmatrix}z
          %%   = A+Bz$, say.
          %%   Then $(A+Bz)*\{(f_0,f_1,f_6) - (f_5,f_3,f_4)z\} =
          %%   0$. The Euclidean dual matrix is $\begin{ssmatrix}f_0^T \\ f_1^T
          %%     \\ f_6^T\end{ssmatrix}
          %%     - \begin{ssmatrix}f_5^T\\f_3^T\\f_4^T\end{ssmatrix}z^{-1}=\begin{ssmatrix}e_0\\e_6\\e_1\end{ssmatrix}
          %%     - \begin{ssmatrix}e_2 \\ e_4\\e_3\end{ssmatrix}z^{-1}$

          %%       Since we are in characteristic $2$, note that
          %%       $-=+$. The dual matrix is equivalent to
          %%       $\begin{ssmatrix}e_2\\e_4\\e_3\end{ssmatrix}z 
          %%         + \begin{ssmatrix}e_0 \\ e_6\\e_1\end{ssmatrix}$.
          %%           It is now necessary to show that the code
          %%           generated by $\begin{ssmatrix}e_2\\e_4\\e_3\end{ssmatrix}z 
          %%         + \begin{ssmatrix}e_0 \\ e_6\\e_5\end{ssmatrix}$ is
          %%           contained in the code generated by $A+Bz$. 
%\end{example}
%% \subsection{Abbreviations}
%% DC: dual-containing\\
%% LCD: linear complementary dual\\
%% QECC: quantum error-correcting code\\
%% MDS: maximum distance separable \footnote{This has different parameter
%% requirements for
%% linear block codes, convolutional codes and QECCs.}\\
%% GSB: Generalised Singleton bound\\
%% rdist: relative distance, which is the ratio of distance over length.

\section{Constructions} 
%Here is a summary of previous designs. 
 The following constructions follow from \cite{unitderived}; 
 see also  \cite{hurleyorder}, \cite{hurley}.

\begin{construction}\label{four} Design MDS linear block codes. 
  \quad
  {\em 
\begin{itemize}{\label{four1}} \item Let $F_n$ be a Fourier $n\ti n$
  matrix over a finite field.
\item Taking $r$ rows of $F_n$ generates an $[n,r]$ code. A
  check matrix for the code is obtained by eliminating the
  corresponding columns of the inverse of $F_n$. 
  \item\label{wrap} Let $r$ rows of $F_n$ be chosen in arithmetic sequence such
    that the arithmetic difference $k$ satisfies $\gcd(k,n)=1$. The
    code generated by these rows is an MDS $[n,r,n-r+1]$ code. There
    exists an explicit efficient decoding algorithm of $O(\max\{n\log
    n, t^2\})$,  $t=\floor{\frac{n-r}{2}}$,    $t$ is the
    error-correcting capability of the code. 

      In particular this is true when $k=1$,  that is when the rows are taken
      in sequence.  \end{itemize}
}
\end{construction}

General Vandermonde matrices may be used instead of Fourier matrices
but are not necessary.  
For a given Fourier $n\ti n$ matrix $F_n$ under consideration the rows of $F_n$
in order are 
denoted by $\{e_0,e_1, \ldots, e_{n-1}\}$ and $n$ times the columns in
order are denoted by $\{f_0,f_1,\ldots, f_{n-1}\}$. $F_n$ is generated by
a primitive $n^{th}$ root of unity $\om$; thus $\om^n=1$, $\om^r\neq
1, 0<r<n$. Hence $e_i=(1,\om^i,\om^{2i},\ldots,
\om^{i(n-1)})$. 
Indices are taken modulo $ n$ so that $e_{i+n}=e_i$. The arithmetic sequences
in Construction \ref{wrap} may wrap around; for example when $n=10$,  such
arithmetic sequences include $\{e_8,e_9,e_0,e_1\},
\{e_3,e_6,e_9,e_2,e_5\}$.

Note also that if $B$ is a check matrix then so also is $nB$ for any
$n\neq 0$. Thus in the above Construction \ref{four} the check matrix
may be obtained from $n$ times the columns of the inverse.

\begin{construction}\label{dualconstruct} Design DC MDS linear block codes.

  {\em This construction follows from  \cite{hurleyquantum}.
\begin{itemize}
\item Let $F_n$ be a Fourier $n\ti n$ matrix with rows $\{e_0,\ldots,
  e_{n-1}\}$ in order and $n$ times the columns of the inverse in order
  are denoted by $\{f_0,\ldots,f_{n-1}\}$.
\item Let $r> \floor{\frac{n}{2}}$ and define
  $A= \begin{ssmatrix}e_0\\e_1\\ \vdots \\e_{r-1} \end{ssmatrix} $.
  \item A check code for the code generated by  $A$
    is $B=(f_r,f_{r+1},\ldots, f_{n-1})$.
    \item Then
    $B^T= \begin{ssmatrix}f_r^T \\f_{r+1}^T \\ \vdots
      \\ f_{n-1}^T\end{ssmatrix} = \begin{ssmatrix}
        e_{n-r}\\ e_{n-r-1} \\ \vdots \\ e_1 \end{ssmatrix}$. Hence
the code generated by $A$ is a DC $[n,r,n-r+1]$ MDS code.
\item This works for any $r$ such that $n> r> \floor{\frac{n}{2}}$.
\end{itemize}}
\end{construction}

\begin{construction}\label{lcdconstruct} Design LCD MDS linear block  codes.

{\em   The design technique  is taken  from \cite{hurleylcd}.

  \begin{itemize}
  
\item Construct a Fourier $n\ti n$ matrix. Denote its  
  rows in order by $\{e_0,\ldots, e_{n-1}\}$ and $n$ times the
  columns of the inverse in order are denoted by $\{f_0,\ldots,f_{n-1}\}$.
\item Design  $A$ as follows. $A$ consists of first  row $e_0$ and 
  rows $\{e_1,e_{n-1},e_2,e_{n-2}, \ldots, e_r,e_{n-r}\}$ for $r\leq
  \floor{\frac{n}{2}}$. ($A$ consists of $e_0$ and pairs
  $\{e_i,e_{n-i}\}$ starting with $\{e_1,e_{n-1}\}$.)
  \item Set $B\T=(f_{r+1},f_{n-r-1}, \ldots, f_{\frac{n-1}{2}},f_{\frac{n+1}{2}})$
  when $n$ is odd and \\ $B\T=(f_{r+1},f_{n-r-1}, \ldots,
  f_{\frac{n}{2}-1},f_{\frac{n}{2}+1}, f_{\frac{n}{2}})$ when $n$ is even.
  \item Then $AB\T=0$ and
  $B$ generates the dual code $\C^\perp$ of the code $\C$ generated by
    $A$.
    \item Using 
      $f_i\T=e_{n-i}$ it is easy to check that $\C\cap \C^\perp = 0$.
      \item The rows of $A$ are in sequence $\{n-r, n-r+1, \ldots, n-1, 0, 1,
  \ldots, r-1\}$ and so $A$ generates an MDS LCD linear block  $[n,2r+1, n-2r]$
  code.  
  \end{itemize} }
\end{construction}
%% When $\gcd(p,n)=1$ then $p^{\phi(n)}\equiv 1 \mod n$. The least
%% positive power such that $p^s \equiv 1 \mod n$ is the order of $p
%% \mod n$ and write $\ord(p,n) =s$.

{\em The general method of constructing MDS codes by choosing rows
      from Fourier matrices does
      not take into account the power of the other non-chosen
      rows.} This can be remedied by going over to convolutional
codes. The convolutional codes obtained carefully in this way
are MDS convolutional codes with free
      distance of the order of twice the distance of the
      corresponding MDS linear block code with the same length and
      rate. 

\begin{lemma}\label{weight} Let $F$ be a Fourier $n\ti n$ matrix with
  rows $\{e_0,e_1,\ldots,e_{n-1}\}$. Define  $A=\begin{ssmatrix}e_0\\e_1\\ \vdots
\\ \vdots 
\\ e_{r-1}\end{ssmatrix}, B=\begin{ssmatrix}0\\ \vdots
\\0\\e_r\\ \vdots \\e_{n-1}\end{ssmatrix}$ where $r>
\floor{\frac{n}{2}}$ and the first $(2r-n)$ rows of $B$ consist of
zeros. Let $P$ be a non-zero vector of length $n$. Then $\wt P(A+Bz)
\geq 2(n-r)+1$. 
\end{lemma}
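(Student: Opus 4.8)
The plan is to reduce the weight of the degree-one polynomial vector $P(A+Bz)$ to the Hamming weights of two ordinary block codewords, and then to invoke the MDS property of blocks of consecutive Fourier rows supplied by Construction \ref{four}. First I would observe that $A+Bz$ is an $r\ti n$ matrix over $\F[z]$ of memory $1$, so, writing $P=(p_0,\ldots,p_{r-1})$ (the information vector, of length $r$), one has $P(A+Bz)=PA+(PB)z$, a vector whose only nonzero coefficients sit at $z^0$ and $z^1$. Hence its (convolutional) weight is exactly $\wt(PA)+\wt(PB)$, and it suffices to bound this sum below by $2(n-r)+1$. The two pieces are ordinary codewords: $PA=\sum_{i=0}^{r-1}p_i e_i$ lies in the code generated by the consecutive rows $e_0,\ldots,e_{r-1}$, while, since the first $2r-n$ rows of $B$ vanish and the rest are $e_r,\ldots,e_{n-1}$, the vector $PB=\sum_{k=r}^{n-1}p_{k-(n-r)}e_k$ lies in the code generated by the consecutive rows $e_r,\ldots,e_{n-1}$.

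By Construction \ref{four}, any block of $s$ consecutive rows of $F$ (arithmetic difference $1$, $\gcd(1,n)=1$) generates an MDS $[n,s,n-s+1]$ code, so any nonzero codeword it produces has weight at least $n-s+1$. I would record the three instances needed: the full $A$-block ($s=r$) forces weight $\geq n-r+1$; the $B$-block ($s=n-r$) forces weight $\geq r+1$; and the truncated $A$-block $e_0,\ldots,e_{2r-n-1}$ ($s=2r-n$, which is $\geq 1$ since $r>\floor{\frac{n}{2}}$) forces weight $\geq 2(n-r)+1$. Because $P\neq 0$ and the rows of $A$ are linearly independent, $PA\neq 0$ in every case, so already $\wt(PA)\geq n-r+1$.

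The argument then splits on whether $PB=0$. If $PB\neq 0$, then $\wt(PA)+\wt(PB)\geq (n-r+1)+(r+1)=n+2$, and since $r>\floor{\frac{n}{2}}$ gives $2r\geq n$ we obtain $n+2\geq 2(n-r)+1$, as required. The delicate case, which I expect to be the only genuine obstacle, is $PB=0$: here the bound $\wt(PA)\geq n-r+1$ alone is too weak. But $PB=0$ means the coefficients $p_{2r-n},\ldots,p_{r-1}$ of $e_r,\ldots,e_{n-1}$ all vanish, so $P$ is supported on $\{0,\ldots,2r-n-1\}$; consequently $PA$ is a nonzero codeword of the truncated MDS $[n,2r-n,2(n-r)+1]$ code, whence $\wt(PA)\geq 2(n-r)+1$ directly while $\wt(PB)=0$. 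In either case $\wt\,P(A+Bz)\geq 2(n-r)+1$.

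The crux of the proof is thus recognising that when the $z$-coefficient $PB$ degenerates to zero, the surviving support of $P$ is forced inside a shorter consecutive block of Fourier rows whose MDS distance is precisely the doubled target $2(n-r)+1=2d-1$, where $d=n-r+1$ is the distance of the corresponding linear block code. The hypothesis $r>\floor{\frac{n}{2}}$ enters exactly twice and essentially: to guarantee that the truncated block is nonempty ($2r-n\geq 1$), and to close the inequality $n+2\geq 2(n-r)+1$ in the nondegenerate case.
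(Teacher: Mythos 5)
Your proof is correct and follows essentially the same route as the paper's: bound $\wt(PA)\geq n-r+1$ and $\wt(PB)\geq r+1$ when $PB\neq 0$ (giving $n+2\geq 2(n-r)+1$), and in the degenerate case $PB=0$ observe that $P$ is supported on the first $2r-n$ coordinates so that $PA$ lies in the MDS $[n,2r-n,2(n-r)+1]$ code generated by $e_0,\ldots,e_{2r-n-1}$. Your write-up is in fact slightly more careful than the paper's on the indexing (the paper treats $P$ as length $n$ and lists the truncated block as $\{e_0,\ldots,e_{2r-n}\}$, both minor slips that you silently correct).
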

\begin{proof} $PA$ has $\wt \geq (n-r+1)$ as $A$ generates an
$[n,r,n-r+1]$ code. $PB$ has weight $\geq r+1$ except when $P$ 
has the last  $(n-r)$ entries consisting of zeros, as the non-zero rows of
$B$ generate an $[n,n-r,r+1]$ code. Now $(n-r+1) + r+1 = n+2 >
2(n-r)+1$. When $P$ has last  $(n-r)$ entries
consisting of zeros then $PA$ contains a non-zero sum of $\{e_0,e_1,\ldots,
e_{2r-n}\}$ which is part of an $[n,2r-n, 2r-n-n+1] = [n,2r-n,2(n-r)+1]$
code and so has weight $\geq 2(n-r)+1$ as required.
\end{proof}

In fact if $P$ is a polynomial of degree $t$ then $\wt P(A+Bz)\geq
2(n-r)+ t+1$ so weight increases with the degree of the  multiplying
polynomial vector.   

Lemma \ref{weight} may be generalised as follows.
\begin{lemma} Let $F$ be a Fourier $n\ti n$ matrix with
  rows $\{e_0,e_1,\ldots,e_{n-1}\}$. Let $A$ be chosen by taking $r$ rows, $r>
\floor{\frac{n}{2}}$, of the Fourier
$n\ti n$ matrix in arithmetic sequence with arithmetic difference $k$
satisfying $\gcd(k,n)=1$. Let $B$ be the matrix with first $(2r-n)$
rows consisting of zeros and the last $(n-r)$ rows consisting of the
rest of the rows of $F$ not in $A$; these last rows of $B$ are also in sequence with
arithmetic difference $k$ satisfying $\gcd(k,n)=1$. Let $P$ be any
non-zero vector of length $n$. Then $\wt P(A+Bz) \geq 2(n-r)+1$.
\end{lemma}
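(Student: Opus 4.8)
The plan is to follow the template of Lemma~\ref{weight} almost verbatim, the only new ingredient being that the maximum-distance property of each block of rows must now be extracted from the arithmetic-sequence case of Construction~\ref{wrap} rather than from the consecutive-row case. I would first recall the convolutional weight convention: writing $P(A+Bz)=PA+PBz$, the weight of this degree-$\le 1$ polynomial codeword is $\wt(PA+PBz)=\wt(PA)+\wt(PB)$, the sum of the Hamming weights of the two coefficient vectors. The whole statement then reduces to lower bounds on $\wt(PA)$ and $\wt(PB)$, which are added.

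The engine of the proof is three applications of Construction~\ref{wrap}, each asserting that a set of rows of $F$ in arithmetic progression with common difference $k$, $\gcd(k,n)=1$, generates an MDS code of the expected parameters. First, $A$ itself is such a progression of length $r$, so it generates an $[n,r,n-r+1]$ code; since its rows are independent and $P\neq 0$, $PA\neq 0$ and $\wt(PA)\ge n-r+1$. Second, the $n-r$ nonzero rows of $B$ are, by hypothesis, a progression of difference $k$ with $\gcd(k,n)=1$, hence generate an $[n,n-r,r+1]$ code, giving $\wt(PB)\ge r+1$ whenever $PB\neq 0$. Third --- and this is the case that produces the target bound --- the first $2r-n$ rows of $A$, those sitting above the zero rows of $B$, are an initial segment of the same progression, so they generate an $[n,2r-n,2(n-r)+1]$ MDS code.

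With these in hand I would split on whether $PB=0$. If $PB\neq 0$, then $\wt(PA+PBz)=\wt(PA)+\wt(PB)\ge (n-r+1)+(r+1)=n+2$, and $n+2>2(n-r)+1$ because $r>\floor{\frac{n}{2}}$ forces $2r>n-1$; so the bound holds with room to spare. If $PB=0$, then by the independence of the nonzero rows of $B$ the coordinates of $P$ indexed by those rows all vanish, so $P$ is supported only on the coordinates multiplying the first $2r-n$ rows of $A$; since $P\neq 0$, $PA$ is a nonzero word of the third MDS code above, whence $\wt(PA+PBz)=\wt(PA)\ge 2(n-r)+1$. In both cases the claim follows.

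The step I expect to require the most care is purely combinatorial bookkeeping rather than a new idea: verifying that each of the three row-sets really is an arithmetic progression with difference coprime to $n$ --- in particular that deleting the $A$-rows from the index set $\Z_n$ leaves the complementary rows of $B$ in progression, and that the first $2r-n$ rows of $A$ align exactly with the zero rows of $B$, so that the degenerate case $PB=0$ confines $P$ to that initial MDS sub-block. Once the indexing is pinned down, using that $x\mapsto a+kx$ is a bijection of $\Z_n$ when $\gcd(k,n)=1$, the distance estimates are immediate from Construction~\ref{wrap} and the argument closes exactly as in Lemma~\ref{weight}.
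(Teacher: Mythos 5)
Your proof is correct and follows essentially the same route as the paper: the paper states this generalisation without proof, but your argument is exactly the natural adaptation of its proof of Lemma~\ref{weight} (case split on $PB=0$, the sum $(n-r+1)+(r+1)=n+2$ exceeding $2(n-r)+1$, and the degenerate case falling back on the $[n,2r-n,2(n-r)+1]$ MDS code generated by the rows of $A$ above the zero rows of $B$). The only addition is your correct observation that each of the three row-sets is still an arithmetic progression with difference $k$ coprime to $n$, so Construction~\ref{four} supplies the needed MDS distances.
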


Before describing the general design methods, it is instructional to
consider the following example.
%% The following example demonstates some of the general principles and
%% design techniques involved. This  example though has small length
%% and  in general there is no restriction on the length and rate that can be achieved
%% in designing the MDS codes.
See also Example \ref{large} below for a
larger example demonstrating the design techniques for constructing a
code  to given rate and distance.

When $\gcd(p,n)=1$, $\ord(p,n)$ denotes the least positive power $s$
such that $p^s\equiv 1\mod n$. 
%and almost always the maximum distance possible is attained. 

\begin{example}\label{pro7} Consider $n=7$. Now $\ord(2,7)=3$ so Fourier
  $7\times 7$ matrix may be constructed  over $GF(2^3)$. The Fourier
  $7\ti 7$ matrix may also be constructed over $GF(3^6)$ as
  $\ord(3,7)=6$ and over many other fields whose characteristic does
  not divide $7$. It may be formed over the prime field $GF(29)$ as
  $\ord(29,7) = 1$; arithmetic in $GF(29)$ is then modular arithmetic.

 We'll stick to $GF(2^3)$ for the moment; when  Hermitian inner
 product is required we'll move  to $GF(2^6)$.

 The rows in order of a Fourier $7\ti
 7$ matrix under consideration are denoted by
 $\{e_0,e_1,e_2,\ldots, e_6\}$ and ($7$ times) the columns of  the inverse in
 order are denoted by $\{f_0,\ldots, f_6\}$; note $7=1$ in
 characteristic $2$. Then $e_if_j=\delta_{ij}, e_i^T = f_{7-i},f_i^T=e_{7-i}$.
 \begin{enumerate}
   \item 
 Construct  $A
  = \begin{ssmatrix}e_0 \\ e_1 \\ e_2\\e_3\end{ssmatrix}$. Then
    $A*(f_4,f_5,f_6)= 0 $. The Euclidean dual matrix is
      $(f_4,f_5,f_6)^T
      = \begin{ssmatrix}f_4^T\\f_5^T\\f_6^T\end{ssmatrix}
        = \begin{ssmatrix}e_3\\e_2\\e_1\end{ssmatrix}$. Thus the code
          generated by $A$ is a DC code $[7,4,4]$ code.
\item 
          To obtain a DC code relative to the Hermitian
          inner product work in $GF(2^6)$. Again the rows of the 
          Fourier $7\ti 7$ matrix over $GF(2^6)$ are denoted by $\{e_0, \ldots,
          e_6\}$. Here $e_i^l=e_{il}$ as explained below where
          $l=2^3$ and thus since $2^3 \equiv 1 \mod 7$ it follows
          that $e_i^l=e_{il} = e_i$. Thus the code generated by $A
  = \begin{ssmatrix}e_0 \\ e_1 \\ e_2\\e_3\end{ssmatrix}$ is an
    Hermitian DC MDS $[7,4,4]$ code. 

      \item      A  DC MDS convolutional code over $GF(2^3)$
           and a DC Hermitian MDS convolutional code over $GF(2^6)$
           are obtained as follows. The distance obtained is
           of the order of twice the distance of the corresponding MDS
           linear block code. %% This shows the power and
           %% extent of the constructions.
           %% a dual-containing convolutional
          %% code which attains the GSB and is MDS.
%% Here's an example in $GF(2^3)$ and in $GF(2^6)$ for
%% Hermitian case to show the power and extent of the constructions.

%% Let $F=F_7$ denote the Fourier matrix over $GF(2^3)$ or over
%% $GF(2^^6)$; the $F$ is different for each case but still denote the
%% rows by $e_0,\ldots,e_7$ in order and $7$ times the columns of
%% the inverse of $F$ by $f_0,f_1,\ldots,f_7$. 

%% Desing $A=\begin{ssmatrix}e_0\\e_1\\e_2\\e_3\end{ssmatrix}$ 
%% $B=(f_4,f_5,f_6)$. Then $AB=0$. Now $B^T=\begin{ssmatrix}e_3
%% \\e_2\\e_1\end{ssmatrix}$. Thus the code generated by $A$ is a
%% dual-containing $[7,4,4]$ code over $GF(2^3)$ and an Hermitian dual-containing $[7,4,4]$  code over $GF(2^6)$.

(In characteristic $2$, $-1=+1$.)

\item Now design  $G[z]= \begin{ssmatrix}e_0\\e_1\\e_2\\e_3\end{ssmatrix}
+ \begin{ssmatrix}0\\e_4\\e_5\\e_6\end{ssmatrix}z$. Then 
$G[z]*((f_4,f_5,f_6) +(f_1,f_2,f_3))z = 0$,  $H\T[z]= (f_1,f_2,f_3))z$.
Then $H[z^{-1}] = \begin{ssmatrix}e_3\\e_2\\e_1\end{ssmatrix}
+ \begin{ssmatrix} e_6\\e_5\\e_4\end{ssmatrix}z^{-1}$.
Thus a control matrix is $K[z]= \begin{ssmatrix}e_3\\e_2\\e_1\end{ssmatrix}z
+ \begin{ssmatrix} e_6\\e_5\\e_4\end{ssmatrix}$. It is easy to show
that the convolutional code generated by $K[z]$ has trivial
intersection with the convolutional code generated by $G[z]$. Thus the
convolutional code generated by $G[z]$ is a LCD  $(7,4,3;1,d_f)$
code. The GSB for a code of this form is 
$3(\floor{\frac{3}{4}+1}+ 3+1 = 7$. The free distance of the one
constructed may be shown to be $7$ directly or from the general Lemma
\ref{weight} below. 

\item Starting with the MDS DC $[7,4,4]$ code, 
a corresponding convolutional code $(7,4,3;1,7)$ is designed
of memory $1$ which is LCD and has almost twice the distance of the
DC linear block code. 

\item Is it possible to go the other way? Methods for designing MDS LCD
linear block codes are established in \cite{hurleylcd}. Following the
method of \cite{hurleylcd}, let $A
= \begin{ssmatrix}e_0\\e_1\\e_6\\e_2\\e_5\end{ssmatrix}$ and hence 
  $A*(f_3,f_4) = 0$. Then $(f_3,f_4)\T
  = \begin{ssmatrix}e_4\\e_3\end{ssmatrix}$ giving that the code
    generated by $A$ is an MDS $[7,5,3]$ LCD code. Then
  $G[z]= \begin{ssmatrix}e_0\\e_1\\e_6\\e_2\\e_5\end{ssmatrix}
    + \begin{ssmatrix}0\\0\\0\\e_3\\e_4\end{ssmatrix}z = A+Bz$, say,
      gives a convolutional $(7,5,2;1,5)$ MDS code. A control matrix
      is $H\T[z] =
      (f_3,f_4)-(f_2,f_5)z$. $H[z^{-1}]= \begin{ssmatrix}e_4\\e_3\end{ssmatrix}
        + \begin{ssmatrix}e_5\\e_2\end{ssmatrix}z^{-1}$. Thus the dual
          code has generating matrix
          $\begin{ssmatrix}e_4\\e_3\end{ssmatrix}z
            + \begin{ssmatrix}e_5\\e_2\end{ssmatrix}$. Now it is
              necessary to show that the code generated by $G[z]$ is
              DC.

Note $\begin{ssmatrix}0&0&0&0&1
  \\0&0&0&1&0\end{ssmatrix}*\{\begin{ssmatrix}e_0\\e_1\\e_6\\e_2\\e_5\end{ssmatrix}+ \begin{ssmatrix}0\\0\\0\\e_3\\e_4\end{ssmatrix}z\}
    =\begin{ssmatrix}e_5\\e_2\end{ssmatrix}+\begin{ssmatrix}e_4\\e_3\end{ssmatrix}z$.
    Hence the code generated by $G[z]$  is DC over
    $GF(2^3)$ and is Hermitian DC over $GF(2^6)$.

  \item Construct $G[z] = \begin{ssmatrix} e_0\\e_1\\e_2 \end{ssmatrix}
    + \begin{ssmatrix}0\\e_3\\e_4\end{ssmatrix}z + \begin{ssmatrix}
        e_5\\0\\e_6 \end{ssmatrix}z^2 $

        Then $G[z]$ is a convolutional code of type $(7,3,5;2)$; the
        degree is $5$. The GSB for such a code is
        $(7-3)(\floor{\frac{5}{3}+1} + 5+1 = 4*2 + 5+1 = 14$. In fact
        the free distance of this codes is actually $14$. This may be
        shown in an analogous way to the proof of Lemma \ref{weight}.

        $G[z]*((f_3,f_4,f_5,f_6) - (f_1,f_2,0,0) - (0,0,f_0,f_2)z^2))=
        0$, $G[z]*((f_0,f_1,f_2)) = 7I_3$.

        The result is that the code generated by $G[z]$
        is a non-catastrophic convolutional
        MDS  $(7,3,5;2,14)$ code. Note the free distance
        attained is $5*3-1$ where $5$ is the free distance of a
        $[7,3,5]$ MDS code; the distance is tripled less $1$. This is
        a  general principle for the more general cases -- the free
        distance is of order three  times the distance of the same
        length and dimension MDS code. 

     It's not a dual-containing code nor a LCD code. To get such
        codes requires a compromise on the distance.  $G[z]
        = \begin{ssmatrix} e_0\\e_1\\e_2 \end{ssmatrix} 
    + \begin{ssmatrix}0\\e_3\\e_4\end{ssmatrix}z + \begin{ssmatrix}
        0\\e_5\\e_6 \end{ssmatrix}z^2 $. This give $(7,3,4;2)$
      convolutional code which turns out to be an LCD code but the
      free distance is only $7$. The GSB for such a code is $13$. 

\item Now for memory $3$ define $G[z]
      = \begin{ssmatrix}e_0\\e_1\end{ssmatrix}+ \begin{ssmatrix}e_2\\e_3\end{ssmatrix}z+\begin{ssmatrix}e_4\\e_5\end{ssmatrix}z^2
          + \begin{ssmatrix}0\\e_6\end{ssmatrix}z^3$
            The GSB for such a code is
            $(7-2)(\floor{\frac{5}{2}}+1)+5+1 = 21$. The free distance
            of the code is actually $21$ so the code is a
            $(7,2,5;3,21)$ convolutional MDS code. This is $6*4-3$
            where $6$ is the free distance of the corresponding block
            linear MDS code $[7,2,6]$.

      \item      Ultimately get a convolutional code
            $e_0+e_1z+e_2z^2+e_3z^3+e_4z^4+e_5z^5+e_6z^6$ which is the
            convolutional MDS code $(7,1,6;6,47)$ code which is
            repetition convolutional code.  
    \end{enumerate}
\end{example}

\section{Specify the codes}

\subsection{Matrices to  work and control}
Many  of the designs hold using general  Vandermonde matrices but the Fourier
matrix case is considered for clarity.

If the Fourier $ n\times n $
matrix $F_n$ exists over a finite field 
then the characteristic $p$ of the field does not divide $n$  which
happens if and only if $\gcd(p,n)=1$.
%Denote by $F_n^*$ the matrix such that $F_nF_n^*=nI_n$. 

Let $F_n$ denote a Fourier matrix of size $n$. Over which finite
fields precisely may this matrix be constructed? Suppose $\gcd(p,n)=1$. 
Then $p^{\phi(n)}\equiv 1 \mod n$, where $\phi$ is the Euler $\phi$-function. Hence  there exists a least positive
power  $s$ that $p^s \equiv 1 \mod n$; this $s$ is called {\em the order of
$p$ modulo $ n$}. Use $\ord (p,n)$ to denote the order of $p \mod n$.
\begin{lemma}
  Let $p$ be any prime such that $p\not|n$ and $s=\ord (p,n) $.
  
(i) There exists an element of order  $n$ in $GF(p^s)$ from which the
Fourier $n\ti n$ matrix may be constructed over $GF(p^s)$.

(ii) The Fourier $n\ti n$ matrix cannot exist
      over a finite field of characteristic $p$ of order smaller than
      $GF(p^s)$.

      (iii) There exists a  Fourier $n\ti n$ matrix 
      over any $GF(p^{rs}), r\geq 1$ and in particular over $GF(p^{2s})$.

\end{lemma}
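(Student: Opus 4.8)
The plan is to reduce all three parts to one arithmetic criterion: a Fourier $n\times n$ matrix exists over a finite field $K$ of characteristic $p$ if and only if $K$ contains a \emph{primitive} $n$-th root of unity, and this in turn holds if and only if $n\mid|K^*|$. First I would record the criterion precisely. By definition (see the Constructions section) a Fourier $n\times n$ matrix is $F_n=(\omega^{ij})_{0\le i,j<n}$, where $\omega$ is a primitive $n$-th root of unity, i.e. $\omega^n=1$ and $\omega^j\neq 1$ for $0<j<n$; equivalently $\omega$ is an element of the multiplicative group $K^*$ of exact order $n$. Since $K=GF(p^m)$ has $K^*$ cyclic of order $p^m-1$, and a cyclic group of order $N$ possesses an element of order $d$ precisely when $d\mid N$, such an $\omega$ exists in $K$ if and only if $n\mid(p^m-1)$, that is $p^m\equiv 1\pmod n$.

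With the criterion in hand, parts (i) and (iii) are immediate from the defining property of $s$, namely that it is the least positive integer with $p^s\equiv 1\pmod n$. For (i): $p^s\equiv 1\pmod n$ gives $n\mid(p^s-1)=|GF(p^s)^*|$, so the cyclic group $GF(p^s)^*$ contains an element $\omega$ of exact order $n$, and $F_n=(\omega^{ij})$ is the required matrix. For (iii): raising the congruence to the $r$-th power gives $p^{rs}=(p^s)^r\equiv 1\pmod n$, hence $n\mid(p^{rs}-1)$ for every $r\ge 1$, and the same cyclic-group argument produces a primitive $n$-th root of unity in $GF(p^{rs})$; taking $r=2$ yields the stated special case $GF(p^{2s})$.

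Part (ii) is the converse direction and is where I would be most careful. Every finite field of characteristic $p$ has the form $GF(p^t)$ for some $t\ge 1$, and ``order smaller than $GF(p^s)$'' means $p^t<p^s$, i.e. $t<s$; note that such a $GF(p^t)$ need not be a subfield of $GF(p^s)$ (that would require $t\mid s$), so I must run the argument for an arbitrary $t<s$ rather than only for subfields. If a Fourier $n\times n$ matrix existed over $GF(p^t)$, the criterion would force $n\mid(p^t-1)$, i.e. $p^t\equiv 1\pmod n$ with $0<t<s$, contradicting the minimality of $s$. Hence no field of characteristic $p$ with fewer than $p^s$ elements admits a Fourier $n\times n$ matrix.

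The only genuine obstacle is establishing the criterion cleanly, in particular the ``only if'' half used in (ii): I must use that a Fourier matrix over $K$ forces an element of \emph{exact} order $n$ in $K$ (not merely some $n$-th root of unity of possibly smaller order), which is precisely the primitivity condition $\omega^j\neq 1$ for $0<j<n$ built into the definition. Everything else reduces to the structure theorem for $GF(p^m)^*$ and the defining minimality of $s$.
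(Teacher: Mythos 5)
Your proof is correct and, for part (i), is essentially the paper's argument: both use that $GF(p^s)^*$ is cyclic of order $p^s-1$ and that $n\mid p^s-1$ to extract an element of exact order $n$. The paper explicitly omits the proofs of (ii) and (iii); your unified criterion ($F_n$ exists over $K$ iff $n\mid|K^*|$) supplies them cleanly, and you are right to flag that (ii) must be argued for arbitrary $t<s$ rather than only for subfields of $GF(p^s)$.
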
\label{order}

\begin{proof} (i) There exists an element of order $p^s-1$ in
  $GF(p^s)$, that is for some $\omega \in GF(p^s)$, $\omega^{p^s-1}
  =1 $. Now $(p^s -1) = qn$ for some $q$ and so $(\omega^q)^n = 1 $ in $GF(p^s)$, giving an
element of order $n$ in $GF(p^s)$. This element may then be used to
construct the Fourier $n\times n$ matrix over $GF(p^s)$.

Proofs of (ii) and (iii) are omitted.
\end{proof} 
%% Over which fields can the Fourier matrix exist? Suppose then
      %% $p\not| n$. Then $\ord(p,n)= s$ for some $s$. Thus $GF(p^s)$
      %% contains an element of order $n$ and the Fourier $n\ti n$ matrix
      %% exists over $GF(p^s)$

For a vector $v=(v_1,v_2,\ldots,v_r)$ define $v^l=(v_1^l,v_2^l,\ldots,
v_r^l)$.

The following lists some properties of a Fourier matrix of size
$n$ over a finite field. These are used throughout. 
\begin{enumerate}
\item Let $F_n$ be a Fourier $n\ti n$ matrix over a field generated by $\om$,
  where $\om^n=1$ and $\om^r\neq 1$ for $0<r<n$.
 \item Denote the rows of  $F_n$  by $\{e_0,e_1,\ldots,
e_{n-1}\}$ in order and
$n$ times the columns of the inverse of $F_n$ in order by
$\{f_0,f_1,\ldots,f_{n-1}\}$.
\item Then $e_if_j=\delta_{ij},
e_i\T=f_{n-i},f_i\T = e_{n-i}$. %These are fundamental relations.
\item 
%For a vector $v=(v_1,v_2,\ldots,v_k)$ define $v^l=(v_1^l,v_2^l,\ldots,
%v_k^l)$. %% The rows of a Fourier $n\ti n$ matrix are $e_i=(1,\om^i,\om^{2i},
%% \ldots, \om^{(n-1)i})$ and so $e_i^l=(1,\om^{il},\om^{2il},\ldots,
%% \om^{(n-1)il})= e_{il}$ where all indices are taken modulo $n$.  
The rows of the $F_n$  are given by
$e_i=(1,\om^i,\om^{2i},\ldots, \om^{(n-1)i})$. Indices are  to be taken modulo $n$. 
 \item $e_i^l=(1,\om^{il},\om^{2il},\ldots, \om^{(n-1)il}) =
   e_{il}$.
   \item Note that if  $l\equiv 1 \mod n$ then $e_i^l = e_{il}=e_i$.
\end{enumerate}

%% which are used a
%% number of times.

       Within $GF(p^{2s})$ the Hermitian inner product is
      defined by ${\langle u,v \rangle}_H= {\langle u,v^l\rangle}_E$ where $l=p^s$. In this setup $\langle e_i,
      e_j \rangle_H =  \langle e_i, e_j^l\rangle_E = \langle
      e_i,e_{jl}\rangle_E$. %%  = \langle e_i, e_j\rangle$ as $l = p^s \equiv 1 \mod 
%% n$. 
This facilitates  the construction of Hermitian
inner product codes  over $GF(2^{2s})$.

\begin{example}\label{tut}

  %This is a prototype example.

  Consider length $n=10$. Now  $\ord(3,10) = 4$.  Construct the Fourier
  $10\times 10 $ matrix $F_{10}$ over $GF(3^4)$. Denote the rows in
  order of $F_{10}$ by $\{e_0,e_1,\ldots, e_9\}$ and the $10$ times
  the columns of the inverse of $F_{10}$  in order by $\{f_0,f_1,
  \ldots, f_9\}$. 
  Then $e_if_j=\delta_{ij}$. Also $e_i\T = f_{10-i}$, %% but in
  %% characteristic $3$, $10=1$ and so $e_i^T = f_{10-i},
  $f_i\T=e_{10-i}$.

\begin{enumerate} \item  As in \cite{hurley} construct
  $A= \begin{ssmatrix}e_0\\e_1\\e_2\\e_3\\e_4\\e_5\end{ssmatrix}$. Then
    $A*(f_6,f_7,f_8,f_9) = 0$. Now by Construction \ref{four},
    \cite{unitderived} 
    the code generated by $A$ is an MDS $[10,6,5]$ code. \item The Euclidean
    dual of $A$ is  $(f_6,f_7,f_8,f_9)\T
    = \begin{ssmatrix}f_6\T\\f_7\T\\f_8\T\\f_9\T\end{ssmatrix}
      = \begin{ssmatrix}e_4\\e_3\\e_2\\e_1\end{ssmatrix}$. Thus $A$ is
        a DC code.

\item         Now $e_i^{3^2} =e_{9*i}=e_{-i}=e_{10-i}$. Thus $A$ is not a
        DC code under the Hermitian inner product induced 
        in $GF(3^4)$.

        Consider $GF(3^8)$. This has an element of order $10$, as
        $3^8-1 = (3^4-1)(3^4+1)$,  
        and so the Fourier $10\times 10$ matrix may be constructed
        over $GF(3^8)$. Here then the Hermitian inner product is
        ${\langle u, v \rangle}_H = 
        {\langle u,v^{3^4}\rangle}_E$, where the suffix $E$ denotes
        the Euclidean inner product.

      Now $e_i^l=e_{il} = e_{i*(3^4)}=e_i$. Thus ${\langle e_i,
          e_j \rangle}_H = {\langle
          e_i,e_j\rangle}_E$. Hence here then the code generated by
        $A$, constructed in
        $GF(3^4)$,  is a
        DC  code under the {\em Hermitian inner product}.

        \item Also $\ord(7,10)=4$ so the above works over $GF(7^4)$ and
        over $GF(7^8)$ when seeking Hermitian DC  codes.

\item   Better though is the following. $\ord(11,10)=1$ and so the
        prime field $GF(11)$ may be considered. $A$ as above is then a
        DC code over the prime field $GF(11)$ and a DC
        Hermitian code when considered over $GF(11^2)$.

       \item  What is an element of order $10$ in $GF(11)$?  In
        fact $\omega = (2 \mod 11)$ works. In $GF(11)$ the arithmetic
        is modular arithmetic. An element of order $10$ is required in
        $GF(11^2)$. Now $GF(11^2)$ is constructed  by
        finding an irreducible polynomial of degree $2$ over
        $Z_{11}$. A primitive element is easily found.

        %% In $F_{10}$ the linear $[10,6,5]$ MDS code is obtained by
        %% considering $A = \begin{ssmatrix}
        %%   e_0\\e_1\\e_2\\e_3\\e_4\\e_5\end{ssmatrix}$. Then letting
        %%   $B=(f_6,f_7,f_8,f_9)$ gives that $AB=0$ and the check matrix
        %%   is obtained. The full rows of $F_{10}$ may be used to obtain
        %%   a convolutional MDS code of the same rate but with twice the
        %%   distance less $1$.

      \item Let $A$ be as above and $B=\begin{ssmatrix} 0
          \\0\\e_6\\e_7\\e_8\\e_9\end{ssmatrix}$. Define $G[z]=A+Bz$.
          The code $\C$ generated by $G[z]$ is a $(10,6,4;1,d_f)$
          convolutional code. 
         \item  Then $G[z]*\{(f_6,f_7,f_8,f_9) - (f_2,f_3,f_4,f_5)z\} = 0$ and
          $G[z]*\{f_0,f_1,f_2,f_3,f_4,f_5)\} = I_5$. Thus $G[z]$ is a
           non-catastrophic generator for the code.
           \item The GSB of such a code is
          $(10-6)(\floor{\frac{4}{6}})+4 + 1 = 4+4+1 = 9$. The free
          distance of $\C$  may be shown, using Lemma \ref{weight}
          essentially,  to be $9$ and so is thus 
          an MDS convolutional $(10,6,4;1,9)$ code.
          %% \item Note that the
          %% linear code generated by $A$ has distance $4$ and the
          %% non-zero vectors in $B$ generate a linear code of distance
          %% $7$. The free distance may be proved by looking at
          %% $(P_0+P_1z+\ldots )(A+Bz)$. The distance $9$ is obtained
          %% by $P(A+Bz)$ when $P=(\al_0,\al_1,0,0,0,0)$ and then $P*A$
          %% involves just $\{e_0,e_1\}$ and thus has distance $\geq 9$ as
          %% its a codeword in a  $[10,2,9]$ code.
          \end{enumerate}
    Since $\ord(11,10)=1$ the calculations may be done over the prime
    field $GF(11)$,  and over $GF(11^2)$ when Hermitian codes are
    required.       
\end{example}

%\section{Vandermonde/Fourier matrices}

\begin{example}Consider $n=2^5-1 = 31$. Then $\ord(2,31)= 5$ and so
  the Fourier $31\ti 31$ matrix may be constructed over $GF(2^5)$ but
  also over $GF(2^{10})$. Let $r=17$, and $A$ consist of rows $\langle{e_0, e_1, \ldots,
    e_{16}}\rangle$ and $ B= (f_{17},f_{18}, \ldots, f_{30})$. Then $A$
    generates an $[31,17,15]$ MDS code. Now $B\T$ consists of rows
    $\{e_{14}, e_{13}, \ldots, e_{1}\}$, using $f_i\T = e_{31-i}$. Thus The code
    generated by $A$ is a DC MDS $[31,17,15]$ code Euclidean
    over $GF(2^5)$ and Hermitian over $GF(2^{10})$.
\end{example}

%% \begin{example}

%%   Example of  constructing  MDS and MDS DC codes over a prime field.

%% Consider $GF(11)$. This has an element of order $10$. For example
%% $\ord(2,11) = 10$ and thus $(2 \mod 11)$ may be used to construct the
%% Fourier $10\ti 10$ matrix $F$ over $GF(11)=\Z_{11}$.
%% Denote the rows of $F$ by $\{e_0,e_1,\ldots,e_{9}\}$.
%% \begin{enumerate}
%% \item $A= \begin{ssmatrix} e_0 \\e_1\\e_2\\e_3\\e_4\\e_5\end{ssmatrix}$
%% generates a DC MDS code $[10,6,5]$. A check matrix is
%% $H=(f_6,f_7,f_8,f_9)$ and $H\T = \begin{ssmatrix} e_4
%%   \\e_3\\e_2\\e_1\end{ssmatrix}$.
%%   \item
%%     $B= \begin{ssmatrix}e_0\\ e_1\\e_9\\e_2\\e_8\\e_3\\e_7\end{ssmatrix}
%%     $ generates an LCD $(10,7,4)$ code.
%%     \item Use an element of order $10$ to generate the Fourier $10\ti
%%       10$ matrix over $GF(11^2)$. Then $A$ as described in item 1. but
%%       now over $GF(11^2)$ generates a DC MDS $[10,7,4]$ code
%%       Hermitian code. 
%%     \item The code $A$ in the last item can be used to form an
%%       Hermitian MDS $[[10, 4, 4]]$ QECC. 
%% \item The code $A+Bz$ with
%%   $B=\begin{ssmatrix}0\\0\\e_6\\e_7\\e_8\\e_9 \end{ssmatrix}$
%%   is an MDS convolutional  $(10,6,4;1,9)$ code over $GF(11)$ or when
%%   the $e_i$ are formed in $GF(11^2)$ is an MDS convolutional Hermitian
%%   $(10,6,4;1,9)$ code in $GF(11^2)$.
%% \end{enumerate} 
%% The arithmetic is also modular arithmetic in $GF(11)$. The arithmetic
%% in $GF(11^2)$ is more complicated but not much; an element of order
%% $10$ is required within $GF(11^2)$.
%% \end{example}

      \begin{example}\label{large} {Example of a general technique.}

      It is required to construct a rate
        $\geq \frac{7}{8}$ codes which can correct $25$ errors; thus a
         distance $\geq 51$ is required.
{\em LINEAR BLOCK:
         \begin{enumerate}
\item         An $[n,r,n-r+1]$ type code with $\frac{r}{n}\geq
         \frac{7}{8}=R$ and $(n-r+1) \geq 51$ is required. Thus
         $n-r\geq 50$ giving 
         $n(1-R)\geq 50$ and so it is required that $n\geq 400$.
     \item     Construct Fourier matrix $F_{400}$ of size $400\ti 400$ over some
         suitable field, to be determined. The rows are denoted by
         $\{e_0,\ldots, e_{399}\}$ and the columns of $400$ times the
         inverse by $\{f_0,\ldots, f_{399}\}$.

\item          Define $A$ to be the matrix with rows
         $\{e_0,\ldots,e_{349}\}$. Then by \cite{hurleyquantum} $A$ is a
         DC $[400,350,51]$ MDS code. By the CSS
         construction a  $[[400,300,51]]$ MDS QECC  is designed.
        \end{enumerate}
        \begin{itemize}
      \item    Over which fields can $F_{400}$ be defined? The characteristic
         must not divide $400$ but otherwise the fields can be
         determined by finding $\ord(p,n)$ where $\gcd(p,n)=1$.
         \item Now
         $\ord(3,400)=20,\, \ord(7,400)=4, \, \ord(401,400)=1$  so it may
         be constructed over $GF(3^{20}), GF(7^4), GF(401)$ and
         others.
         \item Now $GF(401)$ is a prime field and arithmetic therein
         is modular arithmetic; it is in fact the smallest field over
         which the Fourier $400\ti 400$ can be constructed. 

        \item  An Hermitian dual-containing code may be obtained  by
         working over $GF(p^{2l})$ when there exists an element of the
         required order in $GF(p^l)$. Just
         define $A$ as above to be  a Fourier $400\ti 400$ matrix over say
         $GF(401^2)$ using a $400^{th}$ root of unity in
         $GF(401^2)$. \item The $e_i$ in this case satisfy
         $e_i^{401}=e_{i*401}=e_i$ as $401\equiv 1 \mod 400$. Thus the
         code obtained is an Hermitian dual-containing $[400,350,51]$
         code from which a QECC $[[400,300,51]$ MDS code is
           designed.
\end{itemize} 
{\em            By taking `only' $\{e_0,\ldots, e_{349}\}$ the full
  power of all the
           rows of the  Fourier matrix is not utilised.}
           \begin{enumerate}
             \item Define $A$ as before
           and $B$ to be the matrix whose last $50$ rows are $\{e_{350},
           \ldots, e_{399}\}$ and whose first $350$ rows are zero
           vectors.\item Define $G[z] = A + Bz$. The convolutional code
           generated by $G[z]$ is a $(400,350,50;1)$ code. It may be
           shown to be non-catastrophic by writing down the right inverse
           of $G[z]$. \item
           The GSB for such a code is
           $(450-350)(\floor{\frac{50}{350}}+1)+50+1 = 101$. Using
           Lemma \ref{weight} it may be shown that the free distance
           of the code generated by $G[z]$ is $101$ so it's an MDS
           convolutional $(400,350,50;1,101)$ code. The distance is
           twice less $1$ of the distance of an MDS $[400,350,51]$
           linear block code.
\end{enumerate}
           To get an LCD linear block code of rate $\geq \frac{7}{8}$ and
           $d\geq 51$ it is necessary to go to
           length $401$ or higher. Use the methods of \cite{hurleylcd}.
           \begin{itemize} \item For length $401$, let $A$ be the
           matrix generated $\{e_0,e_1,e_{400}, e_2,e_{399}, \ldots,
           e_{175},e_{226}\}$. (The selection includes pairs
           $e_i,e_{401-i}$.) \item Then the code generated by $A$ is an LCD
           $[401,351,51]$ MDS code.

\item       The fields required for $n=401$ are fairly
           large. Go to $511= 2^9-1$ as here 
           $\ord(2,511)= 9$ so the field $GF(2^9)$ works and has
           characteristic $2$. \item Require $r\geq 511*\frac{7}{8}$ for a rate
           $\geq \frac{7}{8}$. Thus require $r\geq 448$. Take $r=449$ for
           reasons which will appear later. \item Let $F_{511}$ be  the Fourier
           $511\ti 511$ matrix over $GF(2^9)$ or for the Hermitian
           case over $GF(2^{18})$. \item Let $A$ be the matrix with 
           rows $\{e_0,e_1,\ldots,e_{448}\}$. \item Then $A$ is an
           $[511,449,63]$ MDS DC code -- and DC
           Hermitian code over $GF(2^{18})$. \item From this QECC MDS
           codes $[[511,387,63]]$  are designed and are Hermitian
           over $GF(2^{18})$.
\end{itemize}
           %To design an LCD code let $A$ be the matrix with rows $e_0,e_
{Convolutional}
  \begin{enumerate} \item        Let $B$ be the matrix of size $449\ti 511$ with last
           $62$ rows consisting of $\{e_{449}, \ldots, e_{510}\}$ and
         other rows consisting of zero vectors. \item Define
           $G[z] = A + Bz$. Then the code generated by $G[z]$ is a
           non-catastrophic MDS $(511,449,62:1,125)$ convolutional
           code. The proof of the distance follows the lines of Lemma
           \ref{weight}. It has twice the distance less $1$ of the
           corresponding MDS block linear $[511,449,63]$ code.

  \end{enumerate} Now design  LCD codes in $GF(2^9)$.
  \begin{enumerate} \item Let $A$ be the matrix with rows
           $\{e_0,e_1,e_{510}, e_2,e_{509}, \ldots, e_{224},
           e_{287}\}$. Notice the rows are in sequence and so $A$
           generates an $[511,449,63]$ linear block code. 
         \item The check matrix is $H\T= (f_{286},f_{225},f_{285},f_{226}, \ldots,
         f_{256},f_{255})$. \item Then $H$ consists of rows
         $\{e_{225},e_{286}, \ldots, e_{255},e_{256}\}$. Thus the code
         generated by $A$ has trivial intersection with the code
         generated by $H$ and so the code is an MDS LCD block linear
         $[511,449,63]$ code. This is an Hermitian LCD MDS code
         over $GF(2^{18})$.
\item 
         Let $B$ be the matrix whose last $62$ rows are
         $\{e_{286},e_{225},e_{285},e_{226}, \ldots, e_{256},e_{255}\}$
         and   whose first $449$ rows consists of zero vectors. \item Define
         $G[z] = A + Bz$. A check matrix for the code generated by $G[z]$
         is $(f_{286},f_{225},f_{285},f_{226}, \ldots,
         f_{256},f_{255}) - (0,0,\ldots,0,f_{254},f_{257}, \ldots,
         f_{224},f_{287})z = C+Dz$, say. \item  Recall we are in
         characteristic $2$. Now $C^T+D^Tz^{-1}
         = \begin{ssmatrix}e_{225}\\e_{286}\\ e_{226}\\e_{285}\\ \vdots
           \\ \vdots \\e_{255}\\e_{256}  \end{ssmatrix} + \begin{ssmatrix}
           0\\ \vdots\\0\\e_{257}\\e_{254}\\\vdots
           \\e_{287}\\e_{224}\end{ssmatrix}z^{-1}$. \item Thus the dual
         matrix is $\begin{ssmatrix}
           0\\\vdots\\0\\e_{257}\\e_{254}\\\vdots
           \\e_{287}\\e_{224}\end{ssmatrix}+ \begin{ssmatrix}e_{225}\\e_{286}\\ e_{226}\\e_{285}\\ \vdots
           \\ \vdots \\e_{255}\\e_{256} \end{ssmatrix}z$ 
         = $E+Fz$ say. \item It is relatively  easy to show that there is a matrix $K$
         such that $K(A+Bz) = E+Fz$ and so the code generated by
         $A+Bz$ is a DC convolutional 
         $(511,449,62;1)$ code. \item The GSB for such a  code is $125$ and
         this is the distance attained, so the code is a
         DC MDS convolutional $(511,449,62;1,125)$
         code. From this a quantum convolutional code may be
         designed. To obtain Hermitian DC, work in
         $GF(2^{18})$.
\end{enumerate}

         This can be extended to higher degrees. % $3$,$7$. %% finishing when
         %% running out of different rows of Fourier to be used.
         } 
          \end{example}  
         Thus in a sense:

         {\em DC block linear $\longrightarrow$ LCD  convolutional
           degree $1$ at twice the distance. }

         {\em LCD block linear $\longrightarrow$ DC convolutional
           degree $1$ at twice the distance.}

         The LCD block linear to give DC convolutional requires characteristic
         $2$. 
\section{Algorithms}\label{thms}
%% The rows of an $n\ti n$ matrix $F_n$ formed are
%% denoted in order by $\{e_0,e_1,\ldots,e_{n-1}\}$ and $n$ times the
%% columns of the inverse of $F_n$ in order are denoted by
%% $\{f_0,f_1,\ldots,f_{n-1}\}$. Thus 
%% $e_i=(1,\om^i,\om^{2i},\ldots,\om^{(n-1)i})$,
%% $f_i\T=e_{n-1},e_i\T=f_{n-i}, e_if_j=\de_{ij}$. Indices should be
%% taken modulo $n$. Now $n$ has an inverse in the field as the
%% Fourier matrix  exists in that field. 

%% Note also $e_i^l=e_{il}$ where  indices are taken modulo $n$.

%Summary of algorithms: 
\begin{Algorithm}\label{alg1}
  Construct block linear codes of rate $\geq R$ and distance $\geq (2t
  + 1)$ for $0<R<1$,  and positive integer $t$,  and with efficient decoding algorithm.

  {\em  A $[n,r,n-r+1]$ linear block code
    will be designed. Thus $\frac{r}{n} \geq R$, $(n-r+1)
    \geq 2t +1$. This requires $n-r\geq 2t$, $n(1-R)\geq 2t$ and so
    require $n\geq \frac{2t}{1-R}$.

    \begin{enumerate} \item
      Choose $n\geq \frac{2t}{1-R}$ and construct the Fourier $n\ti n$
    matrix $F_n$ over a suitable field. Now choose $ r\geq
    nR$.

\item  Select any $r$ rows of $F_n$ in arithmetic sequence with
    arithmetic difference $k$ satisfying $\gcd(k,n)=1$ and form the matrix
  $A$   consisting of these rows.

\item  The block linear code with generator matrix $A$ is an MDS
  $[n,r,n-r+1]$ code.
  \item The rate is $\frac{r}{n} \geq R$, and the
    distance $d= n-r+1 = n(1-R)+1 \geq 2t + 1$ as required.
    \end{enumerate} }
    \end{Algorithm}
\begin{Algorithm}\label{alg2}
  Construct block linear codes of rate $\geq R$ and distance $\geq (2t
  + 1)$ for given rational  $R$, $0<R<1$, and positive integer $t$,
  with efficient decoding algorithms,  over fields
  of characteristic $p$.

{\em    A $[n,r,n-r+1]$ code is designed in characteristic $p$. As in Algorithm \ref{alg1} it is
   required that $n\geq \frac{2t}{1-R}$. Require in addition that
   $\gcd(p,n) =1 $.
   
\begin{enumerate}\item Require $n\geq \frac{2t}{1-R}$ and  $\gcd(p,n)
  =1 $.
\item  Construct the Fourier $n\ti n$ matrix over a field of characteristic
 $p$.

\item  Proceed as in items 1-4 of Algorithm \ref{alg1}. \end{enumerate}}
   
    \end{Algorithm}

\begin{Algorithm}\label{alg3}
  Construct block linear codes DC codes of rate $\geq R$ with $
  \frac{1}{2}< R<1 $ and distance $\geq (2t
  + 1)$ for positive integer $t$,    with efficient decoding algorithm.

  {\em  A $[n,r,n-r+1]$ block dual-containing code
    is required. As before 
    require $n\geq \frac{2t}{1-R}$.

    \begin{enumerate} \item
      Choose $n\geq \frac{2t}{1-R}$ and construct the Fourier $n\ti n$
      matrix $F_n$ over a suitable field.

      \item Choose $r\geq nR$. As
    $R>\frac{1}{2}$ then $r>\floor{\frac{n}{2}} + 1$.

\item  Select $A$ to be the first $r$ rows of $F_n$, that is $A$
  consists of rows $\{e_0, e_1,\ldots, e_{r-1}\}$.
  Then $B\T=(f_r,\ldots, f_{n-1})$ satisfies $AB\T=0$. Now
  $B= \begin{ssmatrix}e_{n-r} \\ e_{n-r+1}\\ \vdots
    \\ e_1\end{ssmatrix}$. Thus the code generated by $A$ is a
    dual-containing MDS $[n,r,n-r+1]$ code. 

  \item The rate is $\frac{r}{n} \geq R$, and the
    distance $d= n-r+1 = n(1-R)+1 \geq (2t + 1)$ as required.
    \end{enumerate} } 
    \end{Algorithm}

\begin{Algorithm}\label{alg4}
  Design block linear DC codes of rate $\geq R>
  \frac{1}{2} $ and distance $\geq (2t
  + 1)$ for given $R, t$,  with efficient decoding algorithm, over
  fields of characteristic $p$.

  {\em  A $[n,r,n-r+1]$ block dual containing code
    is required. As before 
    require $n\geq \frac{2t}{1-R}$.

 \begin{enumerate} \item  Choose $n\geq \frac{2t}{1-R}$ and also such
   that $\gcd(p,n)=1$. \item  Construct the Fourier $n\ti n$
   matrix $F_n$ over a field of characteristic $p$.

\item   Now proceed as in items 1-4 of Algorithm \ref{alg3}. \end{enumerate}}
   \end{Algorithm}
\begin{Algorithm}\label{alg51} (i) Design MDS QECCs of form $[[n,2r-n,n-r+1]]$.
  
(ii) Design MDS QECCs of form $[[n,2r-n,n-r+1]]$ over a field of
  characteristic $p$.
  
 {\em  Method:

  (i) By Algorithm \ref{alg3} construct MDS DC codes
  $[n,r,n-r+1]$. Then by CSS construction, construct the MDS 
  $[[n,2r-n,n-r+1]]$ QECC. 

    (ii) By Algorithm \ref{alg4} construct MDS DC codes
  $[n,r,n-r+1]$ over a field of characteristic $p$. Then by CSS construction, construct the MDS
    $[[n,2r-n,n-r+1]$ QECC in a field of characteristic $p$.}

\end{Algorithm}

\begin{Algorithm}\label{alg5} Design LCD MDS codes of rate
  $\geq R$ and distance $\geq 2t+1$.

  {\em   This design follows from  \cite{hurleylcd}.

    \begin{enumerate}
\item  Choose $n\geq \frac{2t}{1-R}$ and $r\geq nR$.  
  
\item For such $n$ let $F_n$ be a Fourier $n\ti n$ matrix 
  with rows $\{e_0,\ldots, e_{n-1}\}$ in order and $n$ times the
  columns of the inverse in order are denoted by $\{f_0,\ldots,f_{n-1}\}$.
\item For $2r+1\geq nR$ and $r\leq \floor{\frac{n}{2}}$ define $A$ as
  follows. $A$ consists of row $e_0$ and  
  rows $\{e_1,e_{n-1},e_2,e_{n-2}, \ldots, e_r,e_{n-r}\}$ for . ($A$ consists of $e_0$ and pairs
  $\{e_i,e_{n-i}\}$ starting with $e_1,e_{n-1}$.)
  \item Set $B\T=(f_{r+1},f_{n-r-1}, \ldots, f_{\frac{n-1}{2}},f_{\frac{n+1}{2}})$
  when $n$ is odd and $B\T=(f_{r+1},f_{n-r-1}, \ldots,
  f_{\frac{n}{2}-1},f_{\frac{n}{2}+1}, f_{\frac{n}{2}})$ when $n$ is even.
  \item Then $AB\T=0$ and
  $B$ generates the dual code $\C^\perp$ of the code $\C$ generated by
    $A$.
    \item Using 
      $f_i\T=e_{n-i}$ it is easy to check that $\C\cap \C^\perp = 0$.
      \item Now
  the rows of $A$ are in sequence $\{n-r, n-r+1, \ldots, n-1, 0, 1,
  \ldots, r-1\}$ and so $A$ generates an MDS, LCD, $[n,2r+1, n-2r]$
  code.  \end{enumerate} }
  
\end{Algorithm}

      \begin{Algorithm}\label{primedual} Construct MDS and MDS
        DC codes and 
      MDS QECCs, all  over prime fields. Construct Hermitian such
      codes over $GF(p^2)$ for a prime $p$.

\begin{itemize}\item  Let $GF(p)=\Z_p$ be a prime field. This has a
primitive element of order $(p-1)=n$, say. \item Construct the Fourier
$n\ti n$ matrix over $GF(p)$.
\item Choose  $r > \frac{p-1}{2}$.
\item Construct $A=\begin{ssmatrix} e_0\\e_1\\\vdots
\\ e_{r-1}\end{ssmatrix}$.
\item $H\T= (f_{r},f_{r+1},\ldots, f_{n-1})$ is a check matrix.
\item $H$ consists of rows $\{e_{n-r}, e_{n-r-1}, \ldots, e_1\}$ and so
the code generated by $A$ is a DC $[n,r,n-r+1]$ MDS code  in $GF(p)$.
\item Over $GF(p^2)$ with $\langle u,v \rangle_H = \langle u, v^p
\rangle_E$ (and different $e_i$) the code generated by  $A$
is a Hermitian DC $[n,r,n-r+1]$ code.
\item  Construct $\begin{ssmatrix} e_0\\e_1\\ \vdots
\\ \vdots\\ e_{r-1}\end{ssmatrix} + \begin{ssmatrix} 0 \\ \vdots \\ 0 \\e_{r+1}
\\\vdots \\ e_{n-1}\end{ssmatrix}z$ where the second matrix  has
$(2r-n)$ initial zero rows. This gives  a convolutional
$(n,r,n-r;1,2(n-r)+1)$ code over $GF(p)$. This code is not
dual-containing even over $GF(p^2)$. 
\end{itemize}\end{Algorithm}

\begin{Algorithm}\label{dualratehalf}{Design infinite series of DC codes
    $[n_i,r_i,d_i]$,  
(i) with rates and rdists satisfying  $\lim_{i\rightarrow
      \infty}\frac{r_i}{n_i} = \frac{1}{2}$ and
    $\lim_{i\rightarrow\infty}\frac{d_i}{n_i} = \frac{1}{2}$, 
(ii) with rates and rdists satisfying  $\lim_{i\rightarrow \infty}\frac{r_i}{n_i} = \frac{s}{q}$ and $\lim_{i\rightarrow\infty}\frac{d_i}{n_i} = 1-\frac{s}{q}$.}
 
{\em \item{1:} Consider a series of $n_1< n_2 < n_3 < \ldots$.
and let $p_i$ be a prime such that $p_i\not|n_i$. Then
$\ord(p_i,n_i)= s_i$, for some $s_i$.
\item{2:} Construct the Fourier $n_i\ti n_i$ matrix
over $GF(p_i^{s_i})$. Let $r_i=\floor{\frac{n_i}{2}}+1$. Then 
$A=\begin{ssmatrix}e_0 \\e_1 \\ \vdots \\e_{r_i-1}\end{ssmatrix}$
generates  a
DC  $[n_i,r_i,n_i-r_i+1]$ MDS code over $GF(p_i^{s_i})$.
\item{3:} Construct the Fourier $n_i\ti n_i$ matrix
over $GF(p_i^{2s_i})$. Let $r_i=\floor{\frac{n_i}{2}}+1$. Then 
$A=\begin{ssmatrix}e_0 \\e_1 \\ \vdots \\e_{r-1}\end{ssmatrix}$
generates  an
Hermitian DC $[n_i,r_i,n_i-r_i+1]$ MDS code over $GF(p_i^{2s_i})$.
\item{4:} $\lim_{i\rightarrow \infty}\frac{r_i}{n_i} = \frac{1}{2},
\lim_{i\rightarrow \infty}\frac{n_i-r_i+1}{n_i}=\frac{1}{2}$.}
\end{Algorithm}
\begin{Algorithm}\label{ratehalfcharp}{Design infinite series of DC  codes
$[n_i,r_i,d_i]$ over fields of given characteristic $p$ with (i) rates and rdists satisfying  $\lim_{i\rightarrow \infty}\frac{r_i}{n_i} = \frac{1}{2}$,  $\lim_{i\rightarrow\infty}\frac{d_i}{n_i} =
\frac{1}{2}$, (ii) rates and rdists satisfying  $\lim_{i\rightarrow
  \infty}\frac{r_i}{n_i} = \frac{s}{q}> \frac{1}{2}$,  $\lim_{i\rightarrow\infty}\frac{d_i}{n_i} =1-
\frac{s}{q}$.}

\item{1:} Consider a series of $n_1< n_2 < n_3 < \ldots$ where
  $\gcd(p,n_i)=1$. Then  $\ord(p,n_i)= s_i$ for some $s_i$. 
\item{2:}\label{3} Construct the Fourier $n_i\ti n_i$ matrix
over $GF(p^{s_i})$. Let $r_i=\floor{\frac{n_i}{2}}+1$. Then 
$A=\begin{ssmatrix}e_0 \\e_1 \\ \vdots \\e_{r-1}\end{ssmatrix}$ is a
DC $[n_i,r_i,n_i-r_i+1]$ code over $GF(p^{s_i})$.
\item{3:} Construct the Fourier $n_i\ti n_i$ matrix
over $GF(p^{2s_i})$. Let $r_i=\floor{\frac{n_i}{2}}+1$. Then 
$A=\begin{ssmatrix}e_0 \\e_1 \\ \vdots \\e_{r_i-1}\end{ssmatrix}$ is a
Hermitian DC  $[n_i,r_i,n_i-r_i]$ code over $GF(p^{2s_i})$.
\item{4:} $\lim_{i\rightarrow \infty}\frac{r_i}{n_i} = \frac{1}{2},
\lim_{i\rightarrow \infty}\frac{n_i-r_i+1}{n_i}=\frac{1}{2}$.

(ii) The general fraction $R=\frac{p}{q}$ may  be obtained by choosing
$r_i=\floor{\frac{s}{q}n_i}$ in item 2. 

\end{Algorithm}

%% By choosing $r_i=\floor{\frac{3n_i}{4}}$ at Algorithm  \ref{3} above, infinite
%% series are obtained where $\lim_{i\rightarrow \infty}\frac{r_i}{n_i}=
%% \frac{3}{4}, \lim_{i\rightarrow \infty}\frac{d_i}{n_i} =
%% \frac{1}{4}$ where $d_i$ is the distance.

By taking a series of odd integers $2n_1+1 < 2n_2+1 < \ldots $ infinite
such series are obtained over fields of characteristic $2$. 
%% Of particular note are dual-containing codes over fields of
%% characteristic $2$.
The odd series $(2^2-1) < (2^3 -1) < (2^4-1)  < \ldots $ is particularly noteworthy.
Note $\ord(2,n)=s$ for $2^s-1=n$ and
$GF(2^s)$ contains an element of order $2^s-1$ describing {\em all} the
non-zero elements of $GF(2^s)$. The Fourier matrix of size $n\ti n$
may then be constructed over $GF(2^s)$. This has many nice consequences
for designing codes of different types in characteristic $2$.

This is illustrated as follows for the case $GF(2^5)$.
\begin{example}

Construct codes and particular types of codes over $GF(2^5)$ and Hermitian
such codes over $GF(2^{10})$.

{\em Let $F_{15}=F$ be the Fourier matrix of size $15\ti 15$ over
$GF(2^5)$ or as relevant over $GF(2^{10})$. 

\item{1:}  For $r>7$, $A=\begin{ssmatrix}e_0\\e_1\\ \vdots
\\e_{r-1}\end{ssmatrix}$ generates a DC MDS code
$[15,r,15-r+1]$.
\item{2:} For $r>7$, $A=\begin{ssmatrix}e_0\\e_1\\ \vdots
\\e_{r-1}\end{ssmatrix}$ considered as a matrix over $GF(2^{10})$
generates an Hermitian  DC MDS code $[15,r,15-r+1]$.
\item{3:} $G[z]= \begin{ssmatrix}e_0\\e_1\\e_2\\e_3\\e_4\\e_5\\e_6\\e_7
\end{ssmatrix} + \begin{ssmatrix}0\\e_8\\e_9\\e_{10}\\e_{11}\\e_{12}\\e_{13}
\\e_{14}\end{ssmatrix}z$ generates an MDS convolutional code
$(15,8,7;1,15)$. This has distance twice the distance less $1$ of the linear
MDS code $[15,8,8]$. But also this code is an LCD code. 
\item{4:} $A= \begin{ssmatrix}
e_0\\e_1\\e_{14}\\e_2\\e_{13}\\e_3\\e_{12}\\e_4\\e_{11}\end{ssmatrix}$
generates a LCD, MDS code $[15,9,7]$; see
\cite{hurleylcd} for details. $A*(f_5,f_{10},f_6,f_9,f_7,f_8)=0$.
\item{5:} Consider $A$ as above constructed in $GF(2^{10})$. Then $A$ is a 
Hermitian LCD code over $GF(2^{10})$.
\item{6:} $A= \begin{ssmatrix}
e_0\\e_1\\e_{14}\\e_2\\e_{13}\\e_3\\e_{12}\\e_4\\e_{11}\end{ssmatrix}$,
$B=\begin{ssmatrix}0\\0\\0\\e_5\\e_{10}\\e_6\\e_9\\e_7\\e_8\end{ssmatrix}$
designs  the MDS convolutional code $(15,9,6;1,13)$, generated by $G[z]=
A+Bz$. This code in addition is a {\em dual-containing} MDS convolutional code. 
\item{7:} By taking $r =\floor{\frac{3}{4}15}+1 = 12$ or $r =\floor{\frac{3}{4}15} = 11$ codes of rate
`near' $\frac{3}{4}$ are obtained; in this case codes of rate $\frac{12}{15}
=\frac{4}{5}$ or $\frac{11}{15}$ attaining the MDS are obtained.}  
\end{example}
\begin{Algorithm}\label{herchar2} Construct infinite series MDS codes of various types
  of linear block and convolutional codes 
over fields of the form $GF(2^i)$ and Hermitian such codes over fields
of the form $GF(2^{2i})$. 

\item{1.} First note  $\ord(2,2^i-1)=i$ for given $i$. 
\item{2.} Construct the
$n_i\ti n_i$ Fourier matrix $F_{n_i}$ over $GF(2^i)$ with $n_i=2^i-1$.
\item{3.} For $r_i>\frac{n_i}{2}$ let $\mathcal{C}_{r_i}$ be the code generated by rows
$\langle e_0,e_1,\ldots, e_{r-1}\rangle$. Then $\mathcal{C}_{r_i}$ is an
$[n_i,r_i,n_i-r_i+1]$ dual-containing MDS linear code over $GF(2^i)$.
\item This gives an infinite series of codes $\mathcal{C}_{r_i}$ of type $[n_i,r_i,n_i-r_i+1]$.
\item{4.} For $r_i=\frac{n_i}{2}$ this gives an infinite series
$\mathcal{C}_{r_i}$ with $\lim_{i\rightarrow \infty}\frac{r_i}{n_i} =
\frac{1}{2}$ and $\lim_{i\rightarrow \infty}\frac{d_i}{n_i} =
\frac{1}{2}$ where $d_i=n_i-r_i+1$ is the distance.
 \item{5:} For $r_i=\floor{\frac{3n_i}{4}}$ this gives an infinite series
$\mathcal{C}_{r_i}$ with $\lim_{i\rightarrow \infty}\frac{r_i}{n_i} =
\frac{3}{4}$ and $\lim_{i\rightarrow \infty}\frac{d_i}{n_i} =
\frac{1}{4}$ where $d_i=n_i-r_i+1$ is the distance.
\item{6:} Other such
infinite series can be obtained with different fractions by letting
$r_i=\floor{Rn_i}$ ($R$ rational) and then series $\mathcal{C}_{r_i}$ with $\lim_{i\rightarrow \infty}\frac{r_i}{n_i} =
R$ and $\lim_{i\rightarrow \infty}\frac{d_i}{n_i} =
1-R$ are obtained where $d_i=n_i-r_i+1$ is the distance.
 
\item Construct the 
$n_i\ti n_i$ Fourier matrix $F_{n_i}$ over $GF(2^{2i})$ with $n_i=2^{i}-1$.
For $r_i>\frac{n_i}{2}$ let $\mathcal{C}_{r_i}$ be the code generated by rows
$\{ e_0,e_1,\ldots, e_{r_i-1}\}$. Then $\mathcal{C}_{r_i}$ is an
$[n_i,r_i,n_i-r_i+1]$ DC MDS Hermitian linear code over
$GF(2^{2i})$. 
\item This gives an infinite series of $\mathcal{C}_{r_i}$ of type
$[n_i,r_i,n_i-r_i+1]$ which are Hermitian DC.
\item{} For $r_i=\frac{n_i}{2}$ this gives an infinite series of
Hermitian DC 
$\mathcal{C}_{r_i}$ codes  with $\lim_{i\rightarrow \infty}\frac{r_i}{n_i} =
\frac{1}{2}$ and $\lim_{i\rightarrow \infty}\frac{d_i}{n_i} =
\frac{1}{2}$ where $d_i=n_i-r_i+1$ is the distance.
 \item $r_i=\floor{\frac{3n_i}{4}}$ designs an infinite series
of Hermitian DC codes $\mathcal{C}_{r_i}$ with $\lim_{i\rightarrow
\infty}\frac{r_i}{n_i} = \frac{3}{4}$ and $\lim_{i\rightarrow
\infty}\frac{d_i}{n_i} = \frac{1}{4}$ where $d_i=n_i-r_i+1$ is the
distance.
\item Other such infinite series of Hermitian codes
  can be obtained with different fractions by choosing
  $r_i=\frac{sn_i}{q}$ for a fraction $\frac{s}{q}<1$.  

\end{Algorithm}
%% \item{4.} For $r>\frac{n}{2}$ let $A$ be the code generated by rows
%% $\langle e_0,e_1,\ldots, e_{r-1}\rangle$. Then $A$ is an $[n,r,n-r+1]$
%% Hermitian dual-containing MDS linear code over $GF(2^{2i})$. 
\begin{Algorithm}\label{primetype} Construct infinite series of codes of various `types'
over prime fields $GF(p)$ and with  Hermitian inner product over fields $GF(p^2)$.

{\em Let $p_1 <p_2 < \ldots $ be an infinite series of primes.

For $p_i$ construct the Fourier $n_i\ti n_i$ matrix over $GF(p_i)$ where
$n_i=p_i-1$.

For $r_i>\frac{n_i}{2}$ let $\mathcal{C}_{r_i}$ be the code generated by rows
$\langle e_0,e_1,\ldots, e_{r_i-1}\rangle$. Then $\mathcal{C}_{r_i}$ is an
$[n_i,r_i,n_i-r_i+1]$ DC MDS linear code over
$GF(p_i)$. {\em The arithmetic is modular arithmetic.} 

This gives an infinite series of codes $\mathcal{C}_{r_i}$ of type $[n_i,r_i,n_i-r_i+1]$.

For $r_i=\frac{n_i}{2}+1$ this gives an infinite series
$\C_{r_i}$ of type $[n_i,r_i,n_i-r_i+1]$ DC codes over prime fields
with $\lim_{i\rightarrow \infty}\frac{r_i}{n_i} =
\frac{1}{2}$ and $\lim_{i\rightarrow \infty}\frac{d_i}{n_i} =
\frac{1}{2}$ where $d_i=n_i-r_i+1$ is the distance.

For $r_i=\floor{\frac{3n_i}{4}}$ this gives an infinite series of such 
$\mathcal{C}_{r_i}$ with $\lim_{i\rightarrow \infty}\frac{r_i}{n_i} =
\frac{3}{4}$ and $\lim_{i\rightarrow \infty}\frac{d_i}{n_i} =
\frac{1}{4}$ where $d_i=n_i-r_i+1$ is the distance.

%For $r_i=\floor{\frac{7n_i}{8}}$ ....

For $r_i=\floor{\frac{p}{q}n_i}$
infinite series are  obtained with $\lim_{i\rightarrow \infty}\frac{r_i}{n_i} =
\frac{p}{q}$ and $\lim_{i\rightarrow \infty}\frac{d_i}{n_i} =
1-\frac{p}{q}$ where $d_i=n_i-r_i+1$ is the distance.

Construct the 
$n_i\ti n_i$ Fourier matrix $F_{n_i}$ over $GF(p_i^{2})$.
For $r_i>\frac{n_i}{2}$ let $\mathcal{C}_{r_i}$ be the code generated by rows
$\{e_0,e_1,\ldots, e_{r_i-1}\}$. Then $\mathcal{C}_{r_i}$ is an
$[n_i,r_i,n_i-r_i+1]$ Hermitian DC MDS linear code over
$GF(p_i^2)$.

This gives an infinite series of $\mathcal{C}_{r_i}$ of type
$[n_i,r_i,n_i-r_i+1]$ which are Hermitian DC over
$GF(p_i^2)$. 

For $r_i=\floor{\frac{n_i}{2}}+1$ this gives an infinite series of
Hermitian DC  
$\mathcal{C}_{r_i}$ codes with $\lim_{i\rightarrow \infty}\frac{r_i}{n_i} =
\frac{1}{2}$ and $\lim_{i\rightarrow \infty}\frac{d_i}{n_i} =
\frac{1}{2}$ where $d_i=n_i-r_i+1$ is the distance.

For $r_i=\floor{\frac{3n_i}{4}}$ this gives an infinite series
of Hermitian DC $\mathcal{C}_{r_i}$ codes with $\lim_{i\rightarrow
\infty}\frac{r_i}{n_i} = \frac{3}{4}$ and $\lim_{i\rightarrow
\infty}\frac{d_i}{n_i} = \frac{1}{4}$ where $d_i=n_i-r_i+1$ is the
distance.

%For $r_i=\floor{\frac{7n_i}{8}}$ \ldots

For $r_i=\floor{\frac{p}{q}n_i}$ an 
infinite series are  obtained with $\lim_{i\rightarrow \infty}\frac{r_i}{n_i} =
\frac{p}{q}$ and $\lim_{i\rightarrow \infty}\frac{d_i}{n_i} =
1-\frac{p}{q}$ where $d_i$  is the distance.

%Other such infinite series can be obtained with different fractions. 
}
\end{Algorithm}

     The following algorithm explains how to design convolutional
     codes with order twice the distance of the corresponding MDS code of
     the same length and rate.
\begin{Algorithm}\label{algor3} Design convolutional MDS memory $1$
  codes to the order of twice the distance of the linear block MDS
  codes of the same length and rate.
  {\em 
  \begin{enumerate} \item Let $F_n$ be a Fourier $n\times n$ matrix. Denote the
    rows in order of $F_n$ by $\{e_0,e_1,\ldots,e_{n-1}\}$ and $n$
    times the columns
    of the inverse of $F_n$ by $\{f_0,f_1,\ldots, f_{n-1}\}$. Then
    $e_if_j=\delta_{ij},f_i^T = e_{n-i}, e_i^T=f_{n-i}$ with indices
    taken $\mod n$.
\item\label{ret} Let $r > \floor{\frac{n}{2}}$. Let $A$ be the matrix
  with first 
  $r$ rows $\{e_0,e_1,\ldots,e_{r-1}\}$ of $F_n$  and let $B$ be the
  matrix whose last rows are $\{e_r,\ldots, e_{n-1}\}$ in order and whose first
  $(n-r)$ rows consists of zero vectors.
\item Define $G[z]=A+Bz$.
  Then $G[z]$ is a generating matrix for a non-catastrophic convolutional
  code $(n,r,n-r;1,2(n-r)+1)$ of free distance $2(n-r)+ 1$. A control 
  matrix is easily written down, as is a right inverse for
  $G[z]$. 
  \end{enumerate} }
\end{Algorithm}
  
  Thus the MDS convolutional code produced of rate $\frac{r}{n}$ has twice
  the distance, 
  less $1$, of the MDS linear code $[n,r,n-r+1]$ with the same
  rate and length. %and is an MDS convolutional code. 

Note that $r$ can be any integer $>\floor{\frac{n}{2}}$ so all rates
$\frac{r}{n}$ for $ \floor{\frac{n}{2}}<r<n$ are obtainable. The dual
code has rate $(1-R)$ where $R=\frac{r}{n}$ is the rate of $\C$ so 
rates $R$ with $R<\frac{1}{2}$ are obtainable 
  
  Alternatively item 2.  of Algorithm \ref{algor3} may be replaced by
  taking rows in arithmetic sequence as follows: \\
  Let $r > \floor{\frac{n}{2}}$ and  $A$ be formed from $F_n$ by
    taking $r$ rows in geometric sequence with geometric
    difference $k$ satisfying $\gcd(k,n)=1$. Define $B$ to be the  
  matrix whose last rows are the other rows of $F_n$ not in $A$  (which also
  are in geometric sequence satisfying the  $\gcd$ condition)  and whose first
  $(n-r)$ rows consist of zero vectors.

The methods are illustrated in the following examples.

%% \begin{example} Consider the $F_{10}$ Fourier matrix.

%%   {\em Then
%%         $A=\langle e_0,e_1,e_2,e_3,e_4,e_5 \rangle$ generates an
%%         $[10,6,5]$ MDS linear code.

%%   Construct 
%%         $$G[z]
%%         = \begin{ssmatrix}e_0\\e_1\\e_2\\e_3\\e_4\\e_5 \end{ssmatrix}
%%         + \begin{ssmatrix}0\\0\\e_6\\e_7\\e_8\\e_9\end{ssmatrix}z = A+Bz$$

%%           A convolutional  $(10,6,4;1, d_f)$ code is obtained.
%%           Now $(A+Bz)*((f_6,f_7,f_8,f_9) - (f_1,f_2,f_3,f_4)z) = 0$
%%           and $(A+Bz)*(f_0,f_1,f_2,f_3,f_4,f_5) = 10I_6$. The code
%%           matrix has a right inverse and hence  the code is
%%           non-catastrophic. The GSB for such a code is
%%           $(10-6)(\floor{\frac{4}{6}}+1) + 4 + 1 = 4+4+1=9$. This is
%%           actually the free distance of the code which almost doubles
%%           the distance of $A$. }
%% \end{example} 

\quad

The cases $2^8=256$ and the near prime $257$ are worth noting.
\begin{example} (i) Construct DC MDS codes of length
  $255$ of various permissible rates over $GF(2^8)$.
  (ii) Construct Hermitian DC MDS codes of length $255$
  of various permissible rates over $GF(2^{16})$.
  (iii) Construct QECC MDS codes of length
  $255$ of various rates over $GF(2^8)$.
  (iv) Construct Hermitian QECC MDS codes of length $255$
  of various rates over $GF(2^{16})$.
  (v) Construct LCD MDS codes of length
  $255$ of various rates over $GF(2^8)$.
  (vi) Construct Hermitian LCD MDS codes of length $255$
  of various rates over $GF(255^2)$.
  
  {\em 

    \begin{enumerate}
      \item Over $GF(2^8)$ construct the Fourier $255\ti 255$
matrix $F$. 

\item For $r> \floor{\frac{255}{2}} = 127$ let $A$ be the code generated
by the first $r$ rows of $F$. Then $A$ is a DC 
$[255,r,255-r+1]$ code. 

\item For $r=128$ the DC $[255,128,128]$ code of rate about
$\frac{1}{2}$ and rdist  of about $\frac{1}{2}$ is designed. 

\item For $r= \floor{\frac{255*3}{4}}=191$ the code $[255,191,65]$ of
rate about $\frac{3}{4}$ and rdist of about $1/4$ is obtained.

\item For $r= \floor{\frac{255*7}{8}}=223$ the code $[255,223,33]$ of
rate about $\frac{7}{8}$ and rdist  of about $\frac{1}{8}$ is obtained. This
can correct $16$ errors.

\item Over $GF(2^{16})$ construct the Fourier $255\ti 255$ matrix. 

\item For $r> \floor{\frac{255}{2}} = 127$, let $A$ be the code generated
by the first $r$ rows of $F$. Then $A$ is an Hermitian  DC 
$[255,r,255-r+1]$ code. 

\item For $r=128$  the Hermitian DC $[255,128,128]$ code of rate about
$\frac{1}{2}$ and rdist of  about $\frac{1}{2}$ is designed. 

\item For $r= \floor{\frac{255*3}{4}}=191$ the Hermitian DC
code $[255,191,65]$ of rate about $\frac{3}{4}$ and rdist  of about
$\frac{1}{4}$ is obtained. 

\item For $r= \floor{\frac{255*7}{8}}=223$ the Hermitian DC 
code $[255,223,33]$ of rate about $\frac{7}{8}$ and rdist 
of about $\frac{1}{8}$ is obtained. This can correct $16$ errors.
    \end{enumerate}

    To obtain the  QECCs,  apply the CSS construction to the DC
    codes formed.

    \quad \quad
    
LCD  codes are designed as follows; see \cite{hurleylcd}
where the method is devised.
\begin{enumerate} \item Let $\C$ be the code generated by the rows $\langle
e_0,e_1,e_{254},e_2,e_{253}, e_3,e_{252}, \ldots,
e_{r},e_{255-r}\rangle$ for $2r<255$. \item Then $\C$ is a $[255,
  2r+1,255-2r]$ code;
notice that the rows of $A$ are in sequence $\{255-r, 255-r+1,\ldots 
254,0,1,2,\ldots, r\}$ so the code is MDS. \item
The dual code of $\C$ is the
code generated by the transpose of \\
$(f_{255-r-1}, f_{r+1}, f_{r+2},f_{255-r-2}, \ldots,
f_{r+2},f_{255-r-2})$ and this consists of rows \\
$\{e_{r+1},e_{255-r-1}, \ldots, e_{255-r-2},e_{r+2}\}$.  Thus $\C\cap
\C^\perp=0$ and so the code is an LCD MDS code.
\item To get LCD MDS Hermitian codes of length $255$ of various  rates
  as above, work in $GF(2^{16})$.  \end{enumerate} 
}
\end{example}

\begin{lemma}\label{lcd} Let $A,B,C,D$ be matrices of the same
  size $r \ti n$. Suppose the code generated by $A$ intersects trivially the
  code generated by $C$ and the code generated by $B$ intersects
  trivially the the code generated by $D$. Then the convolutional code
  generated by $A+Bz$ intersects trivially the convolutional code
  generated by $C\pm Dz$.
  \end{lemma}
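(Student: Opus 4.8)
The plan is to take an arbitrary element $v(z)$ of the intersection of the two convolutional codes and force it to be $0$. Since these codes are generated by the rows of $A+Bz$ and of $C\pm Dz$, any such $v(z)$ can be written as $v(z)=x(z)(A+Bz)=y(z)(C\pm Dz)$ for polynomial row vectors $x(z),y(z)\in\F[z]^{1\times r}$. Expanding $x(z)=\sum_i x_iz^i$ and $y(z)=\sum_j y_jz^j$ with $x_i,y_j\in\F^{1\times r}$ and collecting powers of $z$ turns the single module identity into the family of $\F$-linear relations
\[
x_kA+x_{k-1}B=y_kC\pm y_{k-1}D,\qquad k\geq 0,
\]
with the convention $x_{-1}=y_{-1}=0$. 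This is the key reduction: it converts a statement about modules over $\F[z]$ into relations over $\F$ that can be peeled off one coefficient at a time.

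First I would run the induction from the lowest coefficient upward. At $k=0$ the relation reads $x_0A=y_0C$; the left side lies in the $\F$-row space of $A$ and the right side in the $\F$-row space of $C$, and these meet trivially by hypothesis, so $x_0A=y_0C=0$. In every construction the constant matrices $A$ and $C$ consist of distinct, linearly independent rows of a Fourier matrix, so the maps $x_0\mapsto x_0A$ and $y_0\mapsto y_0C$ are injective and we get $x_0=0$, $y_0=0$. (The same conclusion follows abstractly whenever $A+Bz$ and $C\pm Dz$ are non-catastrophic: a polynomial right inverse $H[z]$ with $(A+Bz)H[z]=I_r$ gives $AH[0]=I_r$ at $z=0$, so $A$, and likewise $C$, has full row rank.)

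For the inductive step, assume $x_0=\dots=x_{k-1}=0$ and $y_0=\dots=y_{k-1}=0$. Then the degree-$k$ relation collapses to $x_kA=y_kC$, because the $B$ and $D$ contributions carry the now-vanishing coefficients $x_{k-1}$ and $y_{k-1}$; in particular the ambiguous sign $\pm$ never enters. Exactly as in the base case, $x_kA=y_kC$ lies in the trivial intersection of the two row spaces, hence equals $0$, and full row rank of $A$ and $C$ gives $x_k=y_k=0$. Therefore $x(z)=y(z)=0$, so $v(z)=0$ and the two convolutional codes meet trivially.

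The main obstacle is precisely the implication $x_kA=0\Rightarrow x_k=0$: it uses the linear independence of the rows of $A$ (and of $C$), and it is what makes the $B$- and $D$-terms drop out in the correct order. Without it the statement can fail — for example, taking $A$ and $D$ zero and $B,C$ equal to a nonzero scalar satisfies both row-space hypotheses, yet the codes generated by $Bz$ and by $C$ meet nontrivially — so the argument genuinely rests on $A$ and $C$ being the constant terms of honest (non-catastrophic) generator matrices, which is the case in all the constructions used here. The companion hypothesis that the row spaces of $B$ and $D$ meet trivially is the symmetric condition that would instead drive a top-down peeling when $B,D$ are the full-rank parts; for the bottom-up argument above it is the independence of the rows of $A$ and $C$ that carries the proof.
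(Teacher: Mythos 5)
Your proof takes the same route as the paper's: write an element of the intersection as $P[z](A+Bz)=Q[z](C\pm Dz)$, compare coefficients of $z$, and peel them off from the bottom using the trivial intersection of the row spaces of $A$ and $C$ at each stage. You are correct, and more careful than the paper's two-line proof, in observing that the step $P_0A=0\Rightarrow P_0=0$ (needed so that the $B$ and $D$ terms actually drop out at the next level) requires $A$ and $C$ to have linearly independent rows --- a hypothesis the lemma omits but which holds in every application, since $A$ and $C$ are formed from distinct rows of a Fourier matrix; your scalar example with $A=D=0$ shows the statement genuinely fails without it.
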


\begin{proof} Compare coefficients of $P[z](A+Bz)$ with coefficients
  of $Q[z](C\pm Dz)$ for $1\ti r$ polynomial vectors $P[z],Q[z]$,
  $P[z] = P_0+P_1z+ \ldots , Q[z]=Q_0+Q_1z+ \ldots $. In turn get
  $P_0=0=Q_0$ then $P_1=0=Q_1$ and so on.
\end{proof}
 Using this, LCD MDS convolutional codes may be designed leading on
 from the DC codes designed in Algorithm \ref{alg4}.

 \begin{Algorithm}\label{convlcd} Design MDS LCD convolutional codes of the order of
   twice the distance of the MDS DC block code with the
   same length and rate.
{\em 
   First of all design the DC codes as in Algorithm
   \ref{algor3}.
     \begin{enumerate} \item Let $F_n$ be a Fourier $n\times n$
       matrix. %% Denote the 
    %% rows in order of $F_n$ by $e_0,e_1,\ldots,e_{n-1}$ and the columns
    %% of the inverse of $F_n$ by $f_0,f_1,\ldots, f_{n-1}$. Then
    %% $e_if_j=\delta_{ij},f_i^T = e_{n-i}, e_i^T=f_{n-i}$ with indices
    %% taken $\mod n$.
\item\label{ret1} Let $r > \floor{\frac{n}{2}}$. Define $A$ to be the matrix
  of the first 
  $r$ rows $(e_0,e_1,\ldots,e_{r-1})$ of $F_n$  and define  $B$ be the
  matrix whose last rows are $e_r,\ldots, e_{n-1}$ in order and whose first
  $n-r$ rows consist of zero vectors.
\item Define $G[z]=A+Bz$.
  Then $G[z]$ is a generating matrix for a non-catastrophic convolutional
  code $(n,r,n-r;1,2(n-r)+1)$ of free distance $2(n-r)+ 1$. A check
  matrix is easily written down, as is a right inverse for
  $G[z]$.
  \item The code generated by $G[z]$ is an LCD MDS convolutional
    code. This is shown as follows:

    A control matrix for the code is
    $H\T[z]=(f_r,\ldots,f_{n-1}) - (f_{n-r},f_{n-r+1},\ldots,
    f_{r-1})z$. Then $H[z^{-1}]= \begin{ssmatrix} e_{n-r} \\ e_{n-r-1} \\ \vdots
        \\ e_1 \end{ssmatrix} - \begin{ssmatrix}e_{r}\\e_{r-1}\\ \vdots
        \\ e_{n-r+1} \end{ssmatrix}z^{-1}$. Thus a control matrix is
      $\begin{ssmatrix} e_{n-r} \\ e_{n-r-1} \\ \vdots 
        \\ e_1 \end{ssmatrix}z - \begin{ssmatrix}e_{r}\\e_{r-1}\\ \vdots
        \\ e_{n-r+1} \end{ssmatrix}$ equal to say $-C+Dz$. Now the code
      generated by $C$ has trivial intersection with the code
      generated by $A$ and the code generated by $D$ has trivial
      intersection with the code generated by $B$. Hence by Lemma
      \ref{lcd} the convolutional code $\C$ generated by $A+Bz$ has trivial
      intersection with the code generated by $C-Dz$. Hence $\C$ is an
      LCD convolutional MDS $(n,r,n-r;1, 2(n-r)+1)$ code.   
  \end{enumerate} }
\end{Algorithm}

\subsection{QECC Hermitian} Of particular interest are QECCs with
Hermitian inner product. These need to be designed over fields $GF(q^2)$
where the Hermitian inner product is defined by ${\langle u, v
  \rangle}_H = {\langle u, v^q\rangle}_E$. Hermitian QECCs can be
designed  by the CSS 
construction from DC Hermitian  codes. A separate algorithm is given
below although it follows by similar methods to those already designed. 
\begin{Algorithm}\label{QECC} Construct QECCs over $GF(q^2)$.

  \item $GF(q^2)$ has an element of order $q^2-1$ and hence an element
    of order $q-1$ as $q^2-1=(q-1)(q+1)$.
    \item Let $\om$ be an element of order $(q-1)$ in $GF(q^2)$. Let
      $n=(p-1)$ and construct the Fourier $n\ti n$ matrix $F_n$ with this
      $\om$. The rows of $F_n$ are denoted by
      $\{e_0,e_1,\ldots,e_{n-1}\}$.
      \item Let $r > \floor{\frac{n}{2}}$ and define $A$ to be the
        matrix with rows $\{e_0,\ldots, e_{r-1}\}$. Then the code
        generated by $A$ is an Hermitian  DC MDS
        $[n,r,n-r+1]$ code over $GF(q^2)$.
        \item Use the CSS construction to form a QECC MDS Hermitian code
          $[[n,2r-n,n-r+1]$ code.
          \item For $r=\floor{\frac{n}{2}}+1 $ a DC
            MDS code of rate about $\frac{1}{2}$ is obtained and an
            MDS QECC 
            of rate $0$ and rdist of about $\frac{1}{2}$. For
            $r=\floor{\frac{3n}{4}}$ a 
            DC MDS code is obtained  of rate about
            $\frac{3}{4}$ and a 
            MDS, QECC of rate about $\frac{1}{2}$ and rdist of  about
            $\frac{3}{4}$. 
          \item Higher rates may be obtained.
\end{Algorithm}

Infinite series of such codes may also be obtained. The following
Algorithm gives an infinite series of characteristic $2$ such codes but other
characteristics are obtained similarly; the characteristics may be mixed.

\begin{Algorithm}\label{char2rate} Construct infinite series of
  characteristic $2$ 
  Hermitian DC $[n_i,r_i,n_i-r_i+1]$ codes $\C_i$
  in which (i) $\lim_{i\rightarrow \infty}\frac{r_i}{n_i} = R$, for $ 1>  R\geq \frac{1}{2}$ and (ii) $\lim_{i\rightarrow \infty}\frac{d_i}{n_i} = 1-
  R$; $R$ here is rational.  From this derive infinite series of Hermitian MDS QECCs $\D_i$ of form
  $[[n_i,2r_i-n_i, n_i-r_i+1]]$ in which $\lim_{i\rightarrow
    \infty}\frac{2r_i-n_i}{n_i} =  (2R-1), $ and (ii) $\lim_{i\rightarrow \infty}\frac{d_i}{n_i} = (1-R)$. 

  {\em Consider $GF(2^{2i})$. This has an element of order $(2^i-1)=n_i$ and
  use this to form the Fourier $n_i\ti n_i$ matrix over $GF(2^{2i})$. Let
  $r_{j,i}>\floor{\frac{n_i}{2}}$ and $A$ be the matrix with rows $\{e_0,\ldots,
  e_{r_{j,i}-1}\}$. Then the code $\C_{i,j}$ generated by $A$ is a Hermitian
  dual-containing code $[n_i,r_{j,i},n_i-r_{j,i}+1]$ code.
 This gives an infinite series $\C_{i,j}$ of Hermitian dual-containing 
 codes in characteristic $2$. The $r{i,j}$ can vary for each $GF(2^{2i})$.
 
  Now fix  $r_{j,i}=\floor{\frac{n_i}{2}} + 1=r_i$ for each $\C_{i,j}$ and
  let $C_i$ be the codes obtained. This  gives the infinite series $\C_i$ of
  $[n_i,r_i,n_i-r_i+1]$ codes and    
  $\lim_{i\rightarrow \infty}\frac{r_i}{n_i} =\frac{1}{2},
  \lim_{i\rightarrow \infty}\frac{d_i}{n_i} =\frac{1}{2}$.

  Fixing $r_{j,i}=\floor{\frac{3n_i}{4}}=r_i$ gives an infinite series
  $\C_i$ of  
$[n_i,r_i,n_i-r_i+1]$ codes with $\lim_{i\rightarrow \infty}\frac{r_i}{n_i} =\frac{3}{4},
  \lim_{i\rightarrow \infty}\frac{d_i}{n_i} =\frac{1}{4}$.

  Fixing $r_{j,i}=\floor{\frac{pn_i}{q}} = r_i$, $1/2 < p/q <1$ gives an
  infinite series $\C_i$ of $[n_i,r_i,n_i-r_i+1]$ codes 
and $\lim_{i\rightarrow \infty}\frac{r_i}{n_i} =\frac{p}{q},
\lim_{i\rightarrow \infty}\frac{d_i}{n_i} =1-\frac{p}{q}$.

The infinite series of Hermitian QECCs with limits as specified is
immediate. 

  }
  
\end{Algorithm}

\subsection{Higher memory}%%  Higher memory is briefly touched upon only
%% and needs to  be expanded.
Higher memory MDS convolutional codes may be 
obtained by  this general  method of using all the rows of an invertible
`good' matrix. %% such as a Fourier or 
%% Vandermonde matrix.
The principle is established in
\cite{hurleyconv} where rows of an invertible matrix are used to construct
convolutional codes. Here just one example is given and the general
construction is left for later work; some extremely nice codes are
obtainable by the method. %% The basic idea is to take rows of the invertible
%% matrix.
%%  and use them all but do not have the same rows at the same
%% level of the degrees. 

\begin{example} {\em Consider again, Example \ref{pro7},  the Fourier
    $7\ti 7$ matrix over $GF(2^3)$ with rows $\{e_0, \ldots, e_6\}$
    and $7$ times the columns of the inverse denoted by $\{f_0, \ldots, f_6\}$.

  Construct $G[z] = \begin{ssmatrix} e_0\\e_1\\e_2 \end{ssmatrix}
    + \begin{ssmatrix}0\\e_3\\e_4\end{ssmatrix}z + \begin{ssmatrix}
        e_5\\0\\e_6 \end{ssmatrix}z^2 $

        Then $G[z]$ is a convolutional code of type $(7,3,5;2)$; the
        degree is $5$. The GSB for such a code is
        $(7-3)(\floor{\frac{5}{3}+1} + 5+1 = 4*2 + 5+1 = 14$. In fact
        the free distance is actually $14$. This may be shown in an
        analogous way to the proof of Lemma \ref{weight}.

        $G[z]*(f_3,f_4,f_5,f_6) - (f_1,f_2,0,0) - (0,0,f_0,f_2)z^2)=
        0$, $G[z]*((f_0,f_1,f_2) = 7I_3$.

        The result is that the code generated by $G[z]$
        is a non-catastrophic convolutional
        MDS  $(7,3,5;2,14)$ code.

        Note the free distance
        attained is $5*3-1$ where $5$ is the free distance of a
        $[7,3,5]$ MDS code; the distance is tripled less $1$. This is
        a  general principle -- the free distance is of order three
        times the distance of the same length and dimension MDS code.

        It's not a dual-containing code nor a LCD code. To get such
        codes requires a compromise on the distance.  $G[z]
        = \begin{ssmatrix} e_0\\e_1\\e_2 \end{ssmatrix} 
    + \begin{ssmatrix}0\\e_3\\e_4\end{ssmatrix}z + \begin{ssmatrix}
        0\\e_5\\e_6 \end{ssmatrix}z^2 $. This give $(7,3,4;2)$
      convolutional code which turns out to be an LCD code but the
      free distance is only $7$. The GSB for such a code is $13$.} 
\end{example}
Convolutional codes $(n,r,\delta)$ have  maximum free
distance $(n-r)(\floor{\frac{\delta}{r}}+1) + \delta + 1$. When
$r>\delta$ this maximum free distance is $n-r+\delta + 1$.
Here is a design method for maximum free distance memory $1$
convolutional codes.

  \begin{example}

    Construct a convolutional code of rate
    $\frac{15}{16}$ which has free distance $\geq 61$.
    
    {\em     It is  required to construct  a $(n,r,n-r)$ convolutional
      code such that $\frac{r}{n}\geq
     \frac{15}{16}$ and $2(n-r)+1\geq 61$. Thus  require $(n-r)\geq 60$ and
     hence require $n(1-R)\geq 30$. Thus $n\geq 30/(1-R)\geq 30*16= 480$.
     Construct the Fourier $480 \times 480$ matrix over a suitable
     field. Require $\frac{r}{n}\geq \frac{15}{16}$ and $r\geq
     \frac{15}{16}*480 =450$. Now by
     Algorithm \ref{algor3} construct the $(480, 450, \delta) $
     convolutional code with $\delta = 30$. This code has free distance
     $2(n-r) + 1 = 60 + 1$ as required. The rate is $\frac{450}{480} =
     \frac{15}{16}$.
     
     The Fourier $480\times 480$ may be constructed
     over  a field of characteristic $p$ where $\gcd(p,480)=1$.
     Now $7^4 \equiv 1 \mod
     480$ so the field  $GF(7^4)$ can be used.
     This has an element of order $480$
     and the Fourier matrix of $480\times 480$ exists over $GF(7^4)$. 

     Suppose now a field of characteristic $2$ for example is
     required. Then replace ``$n\geq 480$'' by ``$n\geq 480$ and $\gcd(2,
     n)=1$ ''.  As we shall see,  it is convenient to take
     $n$ to be $(2^s -1)$ and in this case take $n=2^6-1 = 511$ in which
     case the arithmetic is done in $GF(2^6)$. }

The first prime greater than $480$ is $487$ so the construction can be
done over the prime field $GF(487)$.   
  \end{example}

  These codes have twice the error-correcting capability as MDS codes
  and of the same rate so should be very useful as codes.

  Series of block linear codes which are DC, QECCs, LDC
  are designed in the Algorithms \ref{alg1} to \ref{alg4}. Now we work on the
  types of convolutional codes that can be formed from these types
  when extending according to Algorithm \ref{algor3}.
  Thus design methods for convolutional DC,
  QECCs and LCD codes are required. 
  %% Below is shown how dual-containing codes may be derived in similar
  %% manner with similar error-correcting capabilities. In addition
  %% dual-containing codes lead to the design of quantum codes.
\subsubsection{Comment}
  From a recent article: 
  ``Far more efficient quantum error-correcting codes are needed to
  cope with the daunting error rates of real qubits. The effort to
  design better codes is “one of the major thrusts of the field,”
  Aaronson said, along with improving the hardware.

  Ahmed Almheiri, Xi Dong and Daniel Harlow \cite{alm} did calculations
  suggesting that this holographic “emergence” of space-time works
  just like a quantum error-correcting code. They conjectured in the
  Journal of High Energy Physics that space-time itself is a code — in
  anti-de Sitter (AdS) universes, at least. This lead to  a
  wave of activity in the quantum gravity community, leading to new
  new impulse to quantum  error-correcting codes that could capture more
  properties of space-time.
  
  What this is saying is that
  ``Ahmed Almheiri, Xi Dong and Daniel Harlow originated a powerful new
  idea that the fabric of space-time is a quantum error-correcting code''.

\subsection{DC LCD convolutional}
Convolutional DC codes over fields of characteristic
$2$ may be designed  as follows. %% leading on from the LCD MDS block linear codes as
%% already designed.
% in Algorithm \ref{alg3}. 
\begin{Algorithm}\label{algor1}
  
  \begin{enumerate}

  \item Let $F_{2m+1}$ be a Fourier matrix over a
  field of characteristic $2$. Denote its rows in order  by $\{e_0,
  e_1, \ldots,   e_{n-1}\}$ and the columns of its inverse times $n$
  is denoted by $\{f_0,f_1, \ldots, f_{n-1}\}$ in order, where $n=2m+1$. Then
  $e_if_j=\delta{ij}, f_i\T = e_{n-i}, e_i\T = f_{n-i}$.
  \item\label{jit} Choose the matrix $A$ as follows. Let $e_0$ be its first row
    and then choose $r$  pairs $\{e_i,e_{n-i}\}$ for the other rows
    and such  that $2r \geq m$. Thus $A$ has $(2r+1)$ rows and
    $A$ is an $(2r+1) \times n$ matrix. 
  \item Choose $B$ with first $(4r-2m+1) $ rows consisting of the zero
    vector and the other $2(m-r)$ rows
    consisting of the rest of the pairs ${e_i,e_{n-i}}$ ($m-r$ pairs)
    not used in 
    item \ref{jit}.  Then $B$ is  a $(2r+1) \times n$ matrix. 
  \item Construct $G[z]=A+Bz$. 
    \item $G[z]$ generates a convolutional dual-containing code from which
      quantum convolutional codes may be constructed. The control
      matrix of the code is easy to construct. There is a matrix $K$
      such that $GK=I_{2r+1}$ thus ensuring the code is
      non-catastrophic. The degree, $\delta$, of the code is $2(m-r)$.
 \item The GSB of  such a 
   $(n,2r+1,\delta)$ code is
          $(n-2r-1)(\floor{\frac{\delta}{2r+1)}} + \delta+1 =
          n-2r-1+\delta + 1 = n-2r+\delta = n-2r +2m-2r = 4(m-r)+1$.
    \item It may be shown that the code generated by $G[z]$ is an MDS
      convolutional MDS $(n,
        2r+1,2(m-r);1,4(m-r)+1)$ code. 
          
  \end{enumerate}
\end{Algorithm}

%% For each odd $2n+1$ there exist fields of characteristic $2$ over
%% which the Fourier $F_{2n+1}\times F_{2n+1}$ matrix may be
%% constructed. Find the order of $2 \mod 2n+1$, say  $2^s\equiv 1 \mod 2n+1$ and
%% then the field $GF(2^s)$ contains an element of order $2n+1$ with
%% which to construct the required Fourier matrix. %% Porbably the best
%% fileds to use are the $GF(2^{2^t})$ fields as these have precisely
%% elements of order $2^{2^s}-1$ elements and no `wastage'.

Consider the field $GF(2^n)$. This has an element of order $2^n-1=q$
and it seems best to construct the Fourier $F_q\times F_q$ over
$GF(2^n)$. % there is no `wastage'.

\begin{example} Construct $31\times 31$ DC
  convolutional codes.

{\em   The order of  $2 \mod 31$ is $5$ and thus work in the field $GF(2^5)$. Form the $F_{31}\times F_{31}$
matrix over $GF(2^5)$. Now proceed as in Algorithm \ref{algor1}. For
example let $\{A,B\}$ have rows \\ $\langle e_0, e_1, e_{30}, e_2, e_{29},e_3,e_{28},
e_4,e_{27},e_5,e_{26},
e_6,e_{25}, e_7,e_{24}, e_8,e_{23}\rangle,
\\ \langle 
0,0,0,e_9,e_{22},e_{10},e_{21},
e_{11},e_{20},e_{12},e_{19},e_{13},e_{18},e_{14},e_{17},e_{15},e_{16} \rangle$ respectively.

%% The $0$ in $B$ means the zero-vector of the right size.
Now form $G[z]=A+Bz$. The code generated by $G[z]$ is 
then  a $(31,17,14)$ DC convolutional code. The GSB for such a
code is $(n-r)(\floor{\delta}{r}+1) + \delta + 1 = (14)(1) +
14+1=29$. The generators of $A$ may be arranged in arithmetic sequence
with difference $1$ and so these form a $[31,17,15]$ MDS linear
code. Similarly the non-zero vectors in $B$ generate a $[31,14, 18]$
MDS linear code. Using these it may be shown that this code is an MDS
convolutional code. %% This amounts to showing  $(P_0+P_1z
%% + ...)(A+Bz)$ has distance $\geq 30$ except for $P(A+Bz)$ which has
%% distance $29$ when $P=(*,*,*, 0,\ldots, 0)$ which give a $(31,3,29)$
%% code or one with greater distance.

A quantum convolutional code may be designed from this.   
%Algebraic decoding methods exist. 

Larger rate DC convolutional MDS codes may also be
derived.

For example
let $A,B$ have rows \\ $\langle e_0, e_1, e_{30},e_2,e_{29},  e_3, e_{28},e_4,e_{27},e_5,e_{26},
e_6,e_{25}, e_7,e_{24}, e_8,e_{23}, e_9,e_{22},e_{10},e_{21}\rangle,
\\ \langle 
0,0,0, 0,0,0,0,
e_{11},e_{20},e_{12},e_{19},e_{13},e_{18},e_{14},e_{17},e_{15},e_{16} \rangle$ respectively.

This gives a $(31,21,10)$ DC code. The GSB for such a
code is $(n-r)(\floor{\frac{\delta}{r}}+1) + \delta + 1 = 10+11 =
21$. The free distance of this code is exactly $21$.
Similarly $(31,23,8)$ codes with free  distance $17$, $(31,25,6)$
with free distance $11$ and so on may be obtained.} 
\end{example}

\subsection{Addendum} It is shown in \cite{hurleyconv} how orthogonal
matrices may be used to construct convolutional codes. %% The
%% linear block matrices under this scheme gives LCD codes and these may
%% be `converted' to convolutional dual-containing codes under special
%% circumstances.
%% Methods for constructing suitable orthogonal matrices
%% are given in \cite{hurley44}.
Using orthogonal matrices does not allow the
same control on the distances achieved as can for Vandermonde/Fourier matrices.

Low density parity check (LDPC) codes have important applications in
communications. Linear block LDPC codes are constructed algebraically in \cite{hurley33} and
the methods can be extended to obtain convolutional LDPC codes. This
is dealt with separately.   

\nocite{*}
\end{document}